\documentclass{article}

\usepackage{microtype}
\usepackage{graphicx}
\usepackage{subcaption}
\usepackage{booktabs} % for professional tables

\PassOptionsToPackage{hyphens}{url}
\usepackage{xurl}
\usepackage{hyperref}
\usepackage{color-edits}
\addauthor{kb}{cyan}
\addauthor{dm}{green}

\usepackage[preprint]{icml2026}

\usepackage{amsmath}
\usepackage{amssymb}
\usepackage{mathtools}
\usepackage{amsthm}

\usepackage[utf8]{inputenc}
\usepackage{thmtools}
\usepackage{graphicx}
\usepackage{cancel}
\usepackage{csquotes}
\usepackage{color-edits}

\DeclareMathOperator{\poly}{poly}

\newcommand{\mco}{\mathcal{O}}
\newcommand{\tmco}{\tilde{\mco}}

\newcommand{\rank}{\text{rank}}

\renewcommand{\phi}{\varphi}
\newcommand{\I}{\mathcal{I}}
\newcommand{\stexp}[1]{{\left(\frac {#1} e \right)}^{{#1}}}

\usepackage[capitalize,noabbrev]{cleveref}

\theoremstyle{plain}
\newtheorem{theorem}{Theorem}[section]

\newtheorem{lemma}[theorem]{Lemma}

\theoremstyle{definition}
\newtheorem{definition}[theorem]{Definition}

\theoremstyle{remark}

\usepackage[textsize=tiny]{todonotes}

\icmltitlerunning{Matroid Algorithms Under Size-Sensitive Independence Oracles}

\begin{document}

\twocolumn[
  \icmltitle{Matroid Algorithms Under Size-Sensitive Independence Oracles}

  % It is OKAY to include author information, even for blind submissions: the
  % style file will automatically remove it for you unless you've provided
  % the [accepted] option to the icml2026 package.

  % List of affiliations: The first argument should be a (short) identifier you
  % will use later to specify author affiliations Academic affiliations
  % should list Department, University, City, Region, Country Industry
  % affiliations should list Company, City, Region, Country

  % You can specify symbols, otherwise they are numbered in order. Ideally, you
  % should not use this facility. Affiliations will be numbered in order of
  % appearance and this is the preferred way.
  \icmlsetsymbol{equal}{*}

  \begin{icmlauthorlist}
    \icmlauthor{Kiarash Banihashem}{equal,umd}
    \icmlauthor{Mohammad HajiAghayi}{equal,umd}
    \icmlauthor{Mahdi JafariRaviz}{equal,umd}
    \icmlauthor{Danny Mittal}{equal,umd}
  \end{icmlauthorlist}

  \icmlaffiliation{umd}{Department of Computer Science, University of Maryland, College Park, USA}

  \icmlcorrespondingauthor{Mahdi JafariRaviz}{mahdij@umd.edu}

  % You may provide any keywords that you find helpful for describing your
  % paper; these are used to populate the "keywords" metadata in the PDF but
  % will not be shown in the document
  \icmlkeywords{Matroid Theory, Oracle Models, Approximation Algorithms, Machine Learning, ICML}

  \vskip 0.3in
]

% this must go after the closing bracket ] following \twocolumn[ ...

% This command actually creates the footnote in the first column listing the
% affiliations and the copyright notice. The command takes one argument, which
% is text to display at the start of the footnote. The \icmlEqualContribution
% command is standard text for equal contribution. Remove it (just {}) if you
% do not need this facility.

% Use ONE of the following lines. DO NOT remove the command.
% If you have no special notice, KEEP empty braces:
% \printAffiliationsAndNotice{}  % no special notice (required even if empty)
% Or, if applicable, use the standard equal contribution text:
\printAffiliationsAndNotice{\icmlEqualContribution}

\begin{abstract}
The standard oracle model for matroid algorithms assumes that each independence query can be answered in constant time, regardless of the size of the queried set. While this abstraction has underpinned much of the theoretical progress in matroid optimization, it masks the true computational effort required by these algorithms. In particular, for natural and widely studied classes such as graphic matroids, even a single independence query can require work linear in the size of the set, making the constant-time assumption implausible.

We address this gap by introducing a size-sensitive cost model where the cost of a query $Q$ scales with $|Q|$. Nearly linear-time oracle implementations exist for broad families of matroids, and this refined abstraction therefore captures the true cost of query evaluation while allowing for a more faithful comparison between general matroids and their natural special cases.

Within this framework we study three fundamental algorithmic tasks: finding a basis of a matroid, approximating its rank, and approximating its partition size. We establish tight results, proving nearly matching upper and lower bounds that show the optimal query cost is (up to logarithmic factors) quadratic in the size of the matroid. On the algorithmic side, our upper bounds are realized by explicit procedures that construct the desired solution. On the complexity side, our lower bounds are unconditional and already hold even for weaker distinguishing formulations of the problems. Finally, for matroids with maximum circuit size at most $c$, we show that the quadratic barrier can be broken, providing an algorithm that calculates the maximum-weight basis with expected query cost $\mco(n^{2-1/c} \log n)$.
\end{abstract}

\section{Introduction}

Matroid constraints have become a standard modeling tool in the machine learning community for problems that require selecting subsets subject to combinatorial constraints. Rather than encoding a single rigid structure, matroids provide a flexible abstraction that captures a wide range of practically relevant constraints while still admitting algorithmic guarantees. As a result, matroid-based formulations appear across diverse ML settings, including bandit and online learning problems with feasibility constraints~\cite{DBLP:conf/nips/Papadigenopoulos21,DBLP:conf/icml/TzengOA24}, various formulations of submodular optimization~\cite{DBLP:conf/nips/HanCCW20,DBLP:conf/icml/DuettingFLNZ22,DBLP:conf/icml/DuettingFLNZ23,DBLP:conf/aaai/Gu0023}, and optimization problems arising in preference elicitation, allocation, and incentive design
~\cite{DBLP:conf/aaai/BenabbouLLP21,DBLP:conf/aaai/BarmanV22,DBLP:conf/ijcai/GourvesMT13,DBLP:conf/ijcai/CongXZ25}.

The study of algorithms for matroids has traditionally been framed in terms of the \emph{number of oracle queries} required to solve a given problem.
In this standard model, 
we have access to an oracle that can determine
whether or not any given set is independent in the matroid.
Each query to this oracle is treated as having unit cost, i.e., $\mco(1)$ time, regardless of the size of the set being queried. This has become the prevailing abstraction in the literature, enabling clean analyses of combinatorial and optimization problems on matroids~\cite{chekuri2016fast,chakrabarty2019faster,blikstad2021breaking,blikstad2023fast,terao2024deterministic,khanna2025parallel}.

However, this abstraction is not always realistic once oracle evaluation itself becomes a computational bottleneck.
In practice, answering a feasibility or independence query often requires work proportional to the size of the queried set, since operations that examine large sets cannot be completed in constant time.
Yet the oracle model assigns $\mco(1)$ cost to all queries regardless of their input size.
This creates a mismatch between the abstract query complexity and the actual runtime needed to implement these algorithms, and it means that the true cost of matroid algorithms is not a priori clear for important special cases such as graphs, where an independence query may already require time linear in the size of the queried set.
Recent work in the machine learning community has also highlighted the importance of accounting for the computational cost of oracle access in matroid optimization, albeit from a different perspective that augments the oracle model~\cite{DBLP:conf/nips/EberleHLLMS24}.

This motivates a more refined complexity model: what happens if we account explicitly for the cost of queries as a function of their size, rather than assuming they are constant time? In this paper we adopt precisely this viewpoint. We initiate a systematic study of \emph{matroid algorithms under size-sensitive query costs}, with a focus on understanding both algorithmic techniques and inherent limitations in this more realistic model.  
Formally, we consider a model where an oracle can check the independence of a set $S$ in time $|S|$. 
As we show in this paper, such oracles can be constructed for a broad class of matroids.
In this new model, we analyze three fundamental problems:  
\begin{itemize}
    \item Finding a basis of a matroid,  
    \item Approximating the rank of a matroid,
    \item Computing or approximating the partition size of a matroid. 
\end{itemize}

In all of these cases we obtain \emph{almost-tight upper and lower bounds}. Our upper bounds are achieved by algorithms that explicitly find the output solution (e.g., finding a basis of a matroid or a set of maximum rank), while our lower bounds are unconditional and already hold for the weaker distinguishing versions of the problems. For graphic matroids, these tasks specialize to classical graph problems: bases correspond to spanning forests, and partition size corresponds to graph arboricity.

We distinguish rank estimation from basis finding because the two tasks play complementary roles in our results. Our lower bounds are proved for rank approximation and therefore also apply to the stronger task of basis finding. Conversely, our positive results for basis finding also determine the rank, since the rank is simply the size of any basis returned by the algorithm.

Beyond their intrinsic interest, our results clarify the true cost of matroid algorithms once query evaluation is modeled accurately. This brings the complexity analysis of matroid algorithms closer to practice, and also improves the comparison with important special cases such as graphic matroids. In particular, our upper bounds are stated in terms of the total size of the queried sets. Therefore, for broad families of matroids that admit linear-time oracles, these size-sensitive query-cost bounds translate directly into the same asymptotic running-time bounds.

\subsection{Our Contributions}

We present two main lower bounds for basic matroid tasks in the \emph{size-sensitive query-cost} model. Given a matroid $M=(E,\I)$, in this model the algorithm may submit independence queries $Q\subseteq E$ to an independence oracle; the oracle answers whether $Q\in\I$, and the total cost of the algorithm is the sum of $|Q|$ over all queries made. The theorems below show that, when query cost grows with query size, both rank and matroid-partition-size estimation require quadratic total query cost in the worst case. In \Cref{section:oracles}, we show oracles for \emph{partition}, \emph{graphic}, \emph{bicircular}, \emph{convex transversal}, and \emph{simple job scheduling matroids} which are linear or near-linear in the query size.

\subsubsection{Bounds for Rank Estimation}
Given a matroid $M=(E,\I)$, $\rank(M)$ is the maximum size of any independent set. A trivial algorithm can find the rank using $\Theta(n^2)$ query cost. The theorem below shows that this bound is tight in general.

\begin{restatable}{theorem}{thmrank}
\label{thm:rank}
Any (possibly randomized) algorithm that, with probability at least $2/3$, approximates the rank of a matroid within a multiplicative factor of $1 \pm 1/40$ must incur $\Omega(n^2)$ query cost in the worst case, where $n$ is the size of the ground set.
\end{restatable}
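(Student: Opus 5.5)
We prove the bound with Yao's minimax principle. We build two distributions over matroids on a ground set of size $n$ whose ranks differ by a constant factor larger than $1 \pm 1/40$ permits. We then show that any deterministic algorithm with total query cost $o(n^2)$ tells them apart with probability only $o(1)$.

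The natural candidates are partition-type matroids built from a hidden random structure. Take $E = A \cup B$ with $|A| = |B| = n/2$.
\begin{itemize}
\item In the \emph{yes} distribution, $E$ is split into $n/2$ disjoint random pairs, each pair a parallel class of capacity one. The rank is $n/2$.
\item In the \emph{no} distribution, a hidden random set $H$ of size about $n/10$ is additionally collapsed, so that its pairs jointly have capacity about half their number. The rank drops to roughly $n/2 - n/20$, a gap of about $5\%$, which exceeds the allowed $1/40$ error.
\end{itemize}
These are a partition matroid and a truncation/partition combination respectively, so both are valid matroids. One should rather design the two distributions so that every small set looks identical in both. Concretely, a cleaner choice is to compare a uniform matroid $U_{r,n}$ against a matroid that agrees with it on all sets of size at most $r$, except for a random hidden sparse family of "dependent" sets. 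The key design property is that a query $Q$ can reveal a difference only if $Q$ contains a hidden circuit-like structure. Such a structure exists only for large $Q$, and even then only with probability proportional to $|Q|/n^2$ or similar.

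The core step is a per-query bound: for any fixed query $Q$, the probability over the hidden randomness that $Q$'s answer differs between the two distributions is $\mco(|Q|/n^2)$. I would aim for this via a pairing structure. The hidden difference is a random matching, or a random choice of which of the $\Theta(n^2)$ element pairs are "special," and a set $Q$ detects it only if it contains a special pair in a specific configuration. Alternatively, $Q$ detects it only if it is nearly all of one side. This forces the number of relevant configurations touched to scale with $|Q|$ rather than $|Q|^2$.

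Summing over the adaptive sequence of queries, with a standard hybrid argument (the algorithm's view is identical until the first distinguishing query), the distinguishing probability is at most $\sum_i \mco(|Q_i|/n^2)$. Since success requires distinguishing probability at least $1/3$, this forces $\sum |Q_i| = \Omega(n^2)$.

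The main obstacle is designing the hidden structure so that the detection probability is linear rather than quadratic in $|Q|$, while the rank gap remains a constant fraction. A naive random-matching construction gives probability about $|Q|^2/n^2$, which only yields an $\Omega(n)$ bound on the number of queries. To fix this, I would first try making the dependencies visible only to queries of size close to the rank: queries smaller than about $r$ are always independent in both distributions, and large queries each pay $\Omega(n)$ cost. The per-query detection probability for large sets then needs to be $\mco(1/n)$, so that $\Omega(n)$ large queries, and hence $\Omega(n^2)$ cost, are needed.
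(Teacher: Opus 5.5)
There is a genuine gap. The proposal never fixes a hard family or proves a detection bound. Worse, the concrete design you settle on cannot work for any choice of hidden structure.

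Suppose you compare the fixed matroid $U_{r,n}$ with a random matroid $N$ that agrees with it on small sets, with or without a sparse exceptional family. If $\rank(N)\le(1-c)r$, a single query of an $r$-set separates them. If $\rank(N)\ge(1+c)r$, pick a random $r$-set $B$ and a random $e\notin B$, and query both $B$ and $B\cup\{e\}$.
\begin{itemize}
\item In $U_{r,n}$ the answers are always ``independent, dependent''.
\item In $N$, whenever $B\in\I(N)$, the closure of $B$ is a flat of rank $r$. It therefore contains at most $r$ elements of any basis of $N$, so at least $cr$ choices of $e$ give $B\cup\{e\}\in\I(N)$.
\end{itemize}
Each trial thus deviates with probability at least $cr/n$. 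Hence $\mco(n/(cr))$ trials, at total cost $\mco(n)$, distinguish the two with constant probability. The same holds with any fixed matroid $M_0$ in place of $U_{r,n}$: take $B$ to be a basis of $M_0$.

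A fixed side is exactly what your hybrid argument needs. Only then is the query sequence before the first discrepancy independent of the hidden randomness, so that per-query probabilities for fixed sets can be summed. So your framework, as set up, yields at most a linear lower bound.

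The missing idea is to let both instances depend on the same hidden object, and to pay for adaptivity with exponentially small witness probabilities. The paper does this as follows.
\begin{itemize}
\item It takes a uniformly random $m$-set $S\subseteq[3m]$. It compares $M_{m,S}$ (the free matroid on $S$ united with the rank-$m$ uniform matroid on $[3m]\setminus S$, of rank $2m$) with its truncation to rank $(2-\epsilon)m$, where $\epsilon=1/10$.
\item Every set of size at most $m$ is independent in both, so only queries of cost above $n/3$ matter.
\item The two instances differ exactly on witnesses $W$ with $|W|>(2-\epsilon)m$ and $|W\setminus S|\le m$.
\end{itemize}
Non-witness answers still depend on $S$: for instance, a set of size $m+1$ is independent iff it meets $S$. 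So the algorithm's path is correlated with $S$, and a per-query bound for fixed $Q$ cannot simply be summed along it. Your target of $\mco(1/n)$ detection probability per large query is therefore the wrong quantity.

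Instead, a counting argument shows that a fixed $W$ is a witness for at most roughly a $1.128^{-m}$ fraction of the choices of $S$. A union bound over the at most $2^{q+1}$ nodes of a decision tree with $q$ large queries then forces $q=\Omega(m)$. This gives total cost $\Omega(m^2)=\Omega(n^2)$.
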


The proof constructs a family of hard instances and shows that any algorithm that uses small total query-size cost cannot distinguish the instances. Fix an integer $m$ and let the ground set be $[3m]$. For every $m$-subset $S\subseteq[3m]$ define $M_{m,S}$ to be the matroid obtained as the union of the free matroid on $S$ and the uniform matroid of rank $m$ on $T=[3m]\setminus S$. The matroid $M_{m,S}$ has rank $2m$ and the key property that every set of size at most $m$ is independent, so queries of size at most $m$ give no information about $S$. For a small constant $\epsilon$ (we take $\epsilon=1/10$) truncate $M_{m,S}$ to rank $2m-\epsilon m$ to obtain $M'_{m,S,\epsilon}$. Any witness that distinguishes $M_{m,S}$ from $M'_{m,S,\epsilon}$ must be large, and a combinatorial counting argument shows a fixed large set $W$ can be a witness for only a small fraction of choices of $S$. A deterministic algorithm that makes at most $q$ large queries inspects at most $2^{q+1}$ candidate sets, so unless $q=\Omega(m)$ it fails with high probability on a uniformly random $S$. Since each useful query has size $\Theta(m)$ and $n=3m$, this forces total query cost $\Omega(m^2)=\Omega(n^2)$. Applying Yao's principle extends the deterministic lower bound on this hard distribution to randomized algorithms and yields \Cref{thm:rank}. We refer to \Cref{lem:yao} for details on Yao's principle.

We also study the problem under the additional assumption that the matroid has bounded circumference (the maximum size of any circuit). In this setting, we design a randomized algorithm with expected query cost strictly below quadratic. Moreover, the algorithm solves a stronger problem: it not only outputs the rank and a basis, but also computes a maximum-weight basis.

\begin{restatable}{theorem}{thmbasisboundedcirc} \label{thm:basis-bounded-circ}
For any positive integer $c$, there exists a randomized algorithm (\Cref{algorithm:basis-bounded-circ}) that finds the maximum-weight basis of any matroid with circumference at most $c$ while incurring expected total query cost $\mco(n^{2-1/c}\log n)$.
\end{restatable}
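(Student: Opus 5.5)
The plan is to simulate the greedy algorithm, using random sampling so that the large sets greedy would normally query are never queried. Sort the ground set by weight, $e_1,e_2,\dots,e_n$ in nonincreasing order. Greedy keeps $e_i$ exactly when $e_i$ is not spanned by the current partial basis $B$. Equivalently, $e_i$ is discarded exactly when there is a circuit $C\subseteq B\cup\{e_i\}$ containing $e_i$. The structural fact to exploit is that $|C|\le c$, so every rejection of $e_i$ is certified by a set of size at most $c$. Elements that greedy keeps are certified only by the independence of $B\cup\{e_i\}$, but at most $\rank(M)$ such additions occur. Naively querying $B\cup\{e_i\}$ for every $i$ costs $\Theta(n\cdot\rank(M))$. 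So the goal is to pay the large query cost only when an element is actually kept, and to reject all other elements through cheap queries.

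\textbf{Step 1 (reduce rank by sampling).} I would use a Karger-style sampling lemma for matroids. Sample each element independently with probability $p$ to get $R$, and compute the maximum-weight basis $B_R$ of $M|R$. The restriction still has circumference at most $c$, so this can be done recursively or by plain greedy if $|R|$ is small. Call $e\notin R$ \emph{light} if it is not spanned by the elements of $B_R$ heavier than $e$. The standard argument, which is a deferred-decision analysis of greedy on a random order of coin flips, gives $\mathbb{E}[\#\text{light}] \le \rank(M)/p$. Every non-light element can be discarded safely, and the optimal basis lies in $B_R$ together with the light elements.

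\textbf{Step 2 (cheap filtering using bounded circumference).} The filter has to decide, for each $e$, whether $e$ is spanned by $B_R^{>e}$ (the elements of $B_R$ heavier than $e$) without querying a set of size $|B_R|$. Here I would use that a spanning witness has size at most $c-1$. Hash $B_R$ randomly into $t$ buckets. A fixed witness of size at most $c-1$ falls entirely inside one union of $c-1$ buckets. I would query $\{e\}$ together with the relevant part of such a union. These queries have size about $(c-1)|B_R|/t$, and there are $t^{c-1}$ of them, repeated $\mco(\log n)$ times to boost the success probability. The parameters should be chosen as $p\approx n^{-1/c}$, $t$ as a function of $c$ and $n$, and $|B_R|\le pn$. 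The aim is to balance three quantities at $n^{2-1/c}\log n$: the filtering cost $n\cdot t^{c-1}\cdot|B_R|/t$, the recursive cost on $R$, and the cost of running greedy on $B_R\cup\{\text{light}\}$, which has expected size $\mco(n^{1-1/c}+\rank(M)/p)$. Greedy on a set of that size costs about its square. If $\rank(M)$ is large, I would instead split off the part of the problem where direct greedy is already cheap enough.

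\textbf{Main obstacle.} I expect the hardest step to be the filtering bound. The bucket enumeration above pays $t^{c-2}$ in cost for every element, and for $c\ge 3$ this can destroy the saving. The alternative I would try first is to give up exact filtering. Instead, query $\{e\}\cup X$ for random subsets $X\subseteq B_R^{>e}$ of size about $n^{1-1/c}$. Then run an amortized analysis showing that elements whose witnesses are hard to hit are few in expectation, and that these can be charged to the final greedy pass on the surviving set. Correctness does not depend on the filter being complete. Filtering only ever discards elements that are truly spanned, and the final greedy pass on the survivors is exact, so only the expected cost bound depends on these probabilistic estimates.
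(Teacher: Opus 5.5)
Your plan has a genuine gap in the cost bound. It lies in the final step, not in the filter you flagged.

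\textbf{Where the bound fails.} An exact greedy pass over any candidate set containing $B^*$ must issue independence queries of sizes $1,2,\dots,r$ for the $r=\rank(M)$ elements it keeps. So it costs at least $r(r+1)/2$, however good the earlier filtering was. Bounded circumference does not bound the rank. The direct sum of $n/2$ copies of $U_{1,2}$ (a partition matroid with parts of size $2$ and capacity $1$) has circumference $2$ and rank $n/2$. On it, your last step alone costs $\Theta(n^2)$, against a target of $\mco(n^{3/2}\log n)$.

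In this regime Step 1 is vacuous too, since $\rank(M)/p=rn^{1/c}>n$. Your fallback of splitting off the part where direct greedy is cheap has nothing to split off, because greedy is never cheap at linear rank.

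\textbf{The filter is not the bottleneck.} Querying $\{e\}\cup B_R^{>e}$ directly for each $e$ costs $n(|B_R|+1)=\mco(n^{2-1/c})$ in expectation when $p=n^{-1/c}$. The bucket scheme can never beat this: even with only $\binom{t}{c-1}$ bucket unions, its per-element cost is $\binom{t}{c-1}\frac{(c-1)|B_R|}{t}=\binom{t-1}{c-2}|B_R|\ge|B_R|$. The real problem is your framing that each kept element must be paid for by its own large query. At linear rank that framing by itself forces quadratic cost.

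\textbf{The missing idea.} Certify all kept elements at once, and find discardable elements by binary search. If a set $X$ is dependent, sort it by decreasing weight and binary-search for its shortest dependent prefix. The last element of that prefix is the lightest element of a circuit, hence not in $B^*$. It is found with $\mco(\log n)$ queries of size at most $|X|$. The paper's Algorithm 1 uses this twice.
\begin{itemize}
\item For $n\ln n$ rounds it samples $S$ from the current candidate set $B$ at rate $n^{-1/c}$, and deletes such elements until $S$ is independent. A single query thus tests every element of $S$ simultaneously. This batching is what your per-element random-subset fallback lacks.
\item Elements of $B^*$ are never deleted, so for each $d\notin B^*$ its fundamental circuit $C_d$ with respect to $B^*$ stays inside $B$. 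It lands in $S$ with probability at least $n^{-|C_d|/c}\ge 1/n$, and then forces the deletion of $d$. Hence at most one non-basis element survives in expectation. The loop uses $\mco(n\log n)$ queries (there are at most $n$ deletions in total), each of expected size at most $n^{1-1/c}$, for $\mco(n^{2-1/c}\log n)$ overall.
\item It then runs the same deletion on $B$ itself. One query of size at most $n$ confirms all $r$ basis elements together. Each residual non-basis element costs $\mco(n\log n)$, so this step is $\mco(n\log n)$ in expectation instead of your $\Omega(r^2)$.
\end{itemize}
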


Bounded circumference guarantees that for any element $e$ outside the maximum-weight basis, the fundamental circuit of $e$ has size at most $c$. If a random subset contains this circuit, the algorithm identifies and removes $e$ as the minimum-weight element. Repeating this sampling $O(n \ln n)$ times eliminates non-basis elements with high probability. A final step removes any residual dependent elements.

\subsubsection{Bounds for Partition Size Estimation}
Given a matroid $M=(E,\I)$, the partition size $k(M)$ is the minimum integer $\alpha$ such that $E$ can be partitioned into $S_1,S_2,\dots,S_\alpha$ with $S_i\in\I$ for all $i$.

\begin{restatable}{theorem}{thmpartition} \label{thm:partition}
Let $A$ be a (possibly randomized) algorithm that, given a matroid $M$ whose partition size is either $3$ or $4$, determines the partition size of $M$ with probability at least $2/3$. Let $n$ be the number of elements of $M$. Then $A$ incurs $\Omega(n^2)$ query cost in the worst case.
\end{restatable}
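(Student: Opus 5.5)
We follow the template of the proof of \Cref{thm:rank}: build two families of matroids on the same ground set, one with partition size $3$ and one with partition size $4$, such that every small query is answered identically by both families. Then show that any large query can be a distinguishing witness for only a small fraction of hidden parameters. Yao's principle (\Cref{lem:yao}) then reduces the problem to deterministic algorithms on a random instance.

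\textbf{The construction.} Fix $m$ and take a ground set of size $n=\Theta(m)$. The base instance is a uniform-type matroid $U$ whose partition size is $3$, for example the uniform matroid of rank $r$ on $3r$ elements. Here the partition size is exactly $\lceil n/r\rceil=3$, and every set of size at most $r$ is independent.

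The perturbed instance $M_S$ is indexed by a hidden set $S$ of size $\Theta(m)$, chosen uniformly at random. A natural choice is to add $S$ as an extra circuit (or hyperplane), in the style of a sparse paving matroid. That is, declare the single $r$-set $S$ dependent. The family of dependent $r$-sets must form a valid sparse paving family, meaning any two of them intersect in at most $r-2$ elements.

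If a single circuit does not raise $k$ to $4$, we instead plant structure forcing $4$ parts. One option is to make the elements of $S$, with $|S|$ slightly above $2r$ or so, span only a lower-rank flat. This is analogous to the free-plus-uniform union used in \Cref{thm:rank}: a restriction of rank $\rho$ on a set larger than $3\rho$ forces $4$ parts by the matroid partition formula $k(M)=\max_{A}\lceil |A|/r(A)\rceil$.

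Concretely, we aim for the following. Let $S$ have size $3\rho+1$ with $\rho=\Theta(m)$, and let $M_S$ be $U$ with $S$ truncated to rank $\rho$. Choose $r$ and $\rho$ so that the global count still gives at most $3$ parts otherwise. Then $k(M_S)=4$, while every query $Q$ with $|Q\cap S|\le\rho$ gets the same answer in $U$ and in $M_S$. We must verify with the partition formula that $k(U)=3$ and $k(M_S)=4$, and that $M_S$ is a matroid (this holds by a truncation-of-restriction or union argument, as in the rank proof).

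\textbf{Counting.} A query answered differently in the two instances must satisfy $|Q\cap S|>\rho=\Theta(m)$ while $|Q|\le r$. For a fixed $Q$ of size $\Theta(m)$, the probability over random $S$ that $|Q\cap S|$ exceeds the hypergeometric mean by a constant factor is $2^{-\Omega(m)}$. We make this hold by tuning the constants so that $\rho$ lies strictly above $\mathbb{E}|Q\cap S|$ for every $|Q|\le r$. A deterministic algorithm with total cost $o(m^2)$ makes at most $o(m)$ queries of size $\ge\delta m$. Small queries carry no information, so its decision tree explores at most $2^{o(m)}$ large queries along the path determined by the $U$-answers. A union bound then shows it sees only $U$-answers with probability $1-o(1)$, and so it errs on one family.

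\textbf{Main obstacle.} The delicate step is choosing parameters so that three things hold simultaneously: $k$ jumps from $3$ to $4$, the witness threshold $\rho$ is strictly above the expected intersection of every allowed query with $S$, and $M_S$ is a genuine matroid. If the truncation approach makes the constants too tight, we fall back to a direct sum with the free matroid, as in the rank construction, to adjust the densities.
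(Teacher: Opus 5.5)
\textbf{There is a genuine gap, and it is structural rather than a matter of tuning constants.} In your construction the partition-size-$3$ instance $U$ does not depend on the hidden set $S$. A deterministic algorithm can therefore fix in advance a partition $E=I_1\cup I_2\cup I_3$ into sets independent in $U$ and simply query $I_1,I_2,I_3$, at total cost $n$. On $U$ all three answers are ``independent''. On any $M_S$ with $k(M_S)=4$, at least one answer must be ``dependent'', since otherwise $I_1,I_2,I_3$ would be a $3$-partition of $M_S$. For your concrete choice $U=U_{r,3r}$ and $|S|=3\rho+1$, this is just pigeonhole: some $|I_j\cap S|\ge\rho+1$.

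The same computation shows that the tuning you flag as the main obstacle cannot be achieved. The condition $k(U)=3$ forces $n\le 3r$, and $k(M_S)=4$ via $S$ forces $|S|\ge 3\rho+1$. So for $|Q|=r$ you get $\mathbb{E}|Q\cap S|=r|S|/n\ge |S|/3\ge\rho+\tfrac13$. The witness threshold $\rho+1$ therefore always sits within a constant of the mean, and a single uniformly random $r$-set is a witness with constant probability. Adding a fixed free part by a direct sum changes nothing, because the three-query argument applies to any fixed $3$-partitionable base.

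\textbf{The missing idea is that the $3$-partitionable instance must itself be hidden.} This is just as in \Cref{thm:rank}, where both $M_{m,S}$ and $M'_{m,S,\epsilon}$ depend on $S$. The paper takes a uniformly random partition $S$ of $[4m]$ into $m$ blocks of size $4$, the partition matroid $P_S$, and its $l$-relaxation with $l=m/3$ from \Cref{lemma:l-matroid}. The resulting $Q_{m,3,S}$ has every set of size at most $m/3$ independent and rank $\frac{4m}{3}$ on $4m$ elements, so it is covered by exactly three bases. Truncating its rank by one yields $Q'_{m,3,S}$, which is no longer $3$-partitionable but is still $4$-partitionable via the transversals of $P_S$. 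The two matroids differ only on sets of size exactly $m+\frac m3$ that meet every block. A fixed such set is a witness for only a $\gamma(3)^{-m}\poly(m)$ fraction of partitions (\Cref{lem:witness_fraction}). This forces $\Omega(m)$ queries, each of size greater than $m/3$. Because the $3$-partition of $Q_{m,3,S}$ depends on $S$, the algorithm has no certificate it can check in advance, which is exactly what your construction lacks.
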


The proof follows the same scheme as \Cref{thm:rank} but uses partition matroids together with an $l$-relaxation to ensure that all small sets are automatically independent, forcing any informative query to be large. Given a matroid $M$, its $l$-relaxation $M'$ is defined so that a set $S$ is independent in $M'$ if there exists $S'\subseteq S$ with $|S\setminus S'|\le l$ and $S'\in\I$. Fix integers $m,\alpha$ and let $n=(\alpha+1)m$. For a partition $S$ of $[n]$ into $m$ parts of size $\alpha+1$, let $P_{S}$ be the partition matroid that allows at most one element from each part. Apply the $l$-relaxation with $l=\frac{m}{\alpha}$ so that every set of size at most $l$ is independent, obtaining $Q_{m,\alpha,S}$, and then truncate its rank by one to get $Q'_{m,\alpha,S}$. The only difference between $Q_{m,\alpha,S}$ and $Q'_{m,\alpha,S}$ occurs on sets of size exactly $m+\frac{m}{\alpha}$, and a set $W$ is a \emph{witness} precisely when it has this size and intersects every part of $S$.

A counting argument shows that a fixed set $W$ can be a witness for only a $\poly(m,\alpha)\gamma(\alpha)^{-m}$ fraction of all partitions, where $\gamma(\alpha)>1$ for $\alpha\ge 3$. Thus any decision tree that makes at most $q$ large queries explores at most $2^{q+1}$ candidate sets and finds a witness with negligible probability unless $q=\Omega(m)$. Since each useful query has size at least $l=\Theta(n)$, this gives total query cost $\Omega(m^2)=\Omega(n^2)$. Yao's principle then extends the bound from deterministic decision trees to randomized algorithms.

To obtain an algorithm and upper bound for matroid partitioning under the size-sensitive query-cost model, we apply results from \citet{quanrud2024faster}. They address the base-covering problem: given a matroid $M=(E,\I)$ and an integer $k$, compute $k$ bases $S_1,\dots,S_k$ such that $S_1\cup\cdots\cup S_k=E$. This relates to the partition problem: $M$ admits a cover by $k$ bases if and only if the partition size is at most $k$.

\citet{quanrud2024faster} show that this problem is solvable with $\tmco(nk)$ independence queries. A direct application of their algorithm results in a query cost of $\tmco(n^2k)$, as each query can be of size $n$. Instead, we truncate the matroid to rank $\lceil n/k\rceil$ and apply their method. This reduces the total query cost to $\tmco(n^2)$, which nearly matches the lower bound in \Cref{thm:partition}.

\begin{restatable}{theorem}{thmpartitionupp} \label{thm:partition-upp}
There is an algorithm incurring $\tmco(n^2)$ query cost to calculate the partition size of a matroid.
\end{restatable}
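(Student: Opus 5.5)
We reduce the partition-size computation to a sequence of base-covering decisions and run the algorithm of \citet{quanrud2024faster} on a \emph{truncated} matroid, so every oracle query has size at most about $n/k$.

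First, we observe that $k(M)\le k$ if and only if $E$ can be covered by $k$ independent sets. Any independent cover can be made a partition by deleting duplicates, and subsets of independent sets remain independent.

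Second, we handle the truncation. If $E$ is partitioned into $k$ independent sets, we can split these sets further. Since $\sum_i |S_i| = n$, we can refine them into at most $2k$ independent sets, each of size at most $\lceil n/k\rceil$. Let $M_r$ denote the truncation of $M$ to rank $r=\lceil n/k\rceil$. Then $k(M)\le k$ implies $k(M_r)\le 2k$, and conversely $k(M_r)\le k'$ always implies $k(M)\le k'$, because independence in $M_r$ implies independence in $M$.

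Third, we compute $k(M)$ exactly and cheaply by combining two observations:
\begin{itemize}
\item If $k(M)\le k$, then every independent set in an optimal partition of $M$ may still be large. However, any matroid with $k(M)\le k$ has $k(M_r)\le 2k$.
\item Any cover of $E$ by $k'$ sets independent in $M_r$ is a valid partition of $M$ into $k'$ parts.
\end{itemize}
Conversely, $k(M_r)=\max\bigl(k(M),\lceil n/r\rceil\bigr)$. This identity follows from the matroid partition (Nash-Williams/Edmonds) formula $k(M)=\max_{A\ne\emptyset}\lceil |A|/\rank(A)\rceil$, using $\rank_r(A)=\min(\rank(A),r)$. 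For $A$ with $\rank(A)>r$ we get $\lceil |A|/r\rceil\le\lceil n/r\rceil$, which makes the identity exact.

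So for a guess $k$ we set $r=\lceil n/k\rceil$. For any $k'\ge k$ we have $\lceil n/r\rceil\le k\le k'$, so deciding whether $k(M_r)\le k'$ is equivalent to deciding whether $k(M)\le k'$. We apply the base-covering algorithm of \citet{quanrud2024faster} to $M_r$ with parameter $k$. It uses $\tmco(nk)$ independence queries on $M_r$. Each such query can be answered by querying $M$ on a set of size at most $r$: a set larger than $r$ is rejected outright, and since $r=\rank(M_r)$ is known, no query is needed for it. The cost of one run is therefore $\tmco(nk\cdot n/k)=\tmco(n^2)$.

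Finally, we find the smallest $k$ by binary search over $k\in[1,n]$, running the decision procedure with the truncation tied to the current guess. Monotonicity in $k$ holds, so $\mco(\log n)$ runs suffice, giving total query cost $\tmco(n^2)$. We also need $\rank(M)$ to know whether truncation is nontrivial; computing it greedily with queries of size at most $r$ also costs within budget.

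The main obstacle is making sure the Quanrud algorithm's query count is $\tmco(nk)$ independently of the rank. We also need to check that its queries can be simulated on the truncated matroid with only size-$\le r$ oracle calls. This includes the rank-$r$ bases it needs to cover with, which exist in $M_r$ whenever $k\ge k(M)$ by the identity above. We would verify these points against their stated query bound first.
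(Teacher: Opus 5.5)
Your proposal is correct and is the paper's proof. The paper also binary searches over $k$, truncates to rank $r=\lceil n/k\rceil$, and runs the covering algorithm of \citet{quanrud2024faster}, answering any query larger than $r$ negatively without calling the oracle, for $\tmco(nk)\cdot\mco(n/k)=\tmco(n^2)$ cost per test (it takes $\epsilon=1/(2k)$ in \Cref{lemma:quanrud-covering} to turn the $(1\pm\epsilon)$ guarantee into an exact decision). The one difference is how you show that truncation preserves $k$-partitionability. You derive $k(M_r)=\max\bigl(k(M),\lceil n/r\rceil\bigr)$ from the Nash-Williams/Edmonds covering formula, whereas \Cref{lemma:truncate-preserve-k} uses an elementary exchange argument, repeatedly moving an element from a part of size $>r$ into a part of size $<r$. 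Your $2k$ refinement is therefore an unnecessary detour. Computing $\rank(M)$ is also unnecessary, since every set larger than $r$ is dependent in the truncation whether or not $\rank(M)\ge r$.
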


\subsection{Extension to General Cost Functions}
We extend our analysis to a model where the cost of a query $Q$ is $f(|Q|)$ for a non-decreasing function $f$. We show that our lower bounds hold in this setting.

\begin{restatable}{theorem}{thmRankGeneralFunctionLB} \label{thm:rank-general-function-lb}
Suppose the cost of querying a set $Q$ is determined by a non-decreasing function $f(|Q|)$. Any (possibly randomized) algorithm that, with probability at least $2/3$, approximates the rank of a matroid within a multiplicative factor of $1 \pm 1/40$ must incur $\Omega(n \cdot f(n/3))$ query cost in the worst case, where $n$ is the size of the ground set.
\end{restatable}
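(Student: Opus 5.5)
We reuse the hard family from the proof of \Cref{thm:rank} unchanged and only redo the final cost accounting. Fix $m$ with $n=3m$ and $\epsilon=1/10$. Draw $S$ uniformly from the $m$-subsets of $[3m]$, and flip a fair coin to decide whether the instance is $M_{m,S}$ (rank $2m$) or its truncation $M'_{m,S,\epsilon}$ (rank $2m-\epsilon m$). A $1\pm 1/40$ approximation separates these two ranks, since $(1+1/40)(2m-m/10)<(1-1/40)2m$. So any algorithm that succeeds with probability $2/3$ on every instance must tell the two cases apart with constant advantage over this distribution. By Yao's principle (\Cref{lem:yao}), it suffices to bound the expected cost of a deterministic decision tree that succeeds with probability $2/3$ on the distribution.

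The combinatorial facts carry over verbatim from the proof of \Cref{thm:rank}, because they do not depend on the cost model:
\begin{itemize}
\item[(i)] every set of size at most $m$ is independent in both matroids, so such queries reveal nothing;
\item[(ii)] the two matroids can only differ on sets of size larger than $m$, in fact of size around $2m-\epsilon m$;
\item[(iii)] any fixed large set $W$ is a witness for only an exponentially small fraction $2^{-\Omega(m)}$ of the choices of $S$;
\item[(iv)] a tree making at most $q$ large queries reaches at most $2^{q+1}$ candidate witnesses along its paths.
\end{itemize}
A union bound then shows that if $q\le c_0 m$ for a small constant $c_0$, the tree finds a witness with probability $o(1)$. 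Without a witness, its output is independent of which of the two instances it faces, so its success probability is at most $1/2+o(1)<2/3$. Hence on a constant fraction of inputs the tree must make $\Omega(m)$ queries of size greater than $m$.

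The new step is the cost accounting. In the linear model the proof charged $\Theta(m)$ per useful query. Here each such query $Q$ has $|Q|>m=n/3$, and $f$ is non-decreasing, so it costs $f(|Q|)\ge f(n/3)$. Queries of size at most $m$ are uninformative by (i); they only add nonnegative cost and can be discarded. Conditioning on the constant-probability event that $\Omega(m)$ large queries are made, the expected cost is $\Omega(m)\cdot f(n/3)=\Omega(n\,f(n/3))$.

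The only delicate point is the expected-versus-worst-case handling in Yao's principle. I would argue it in worst-case form: if a randomized algorithm always has cost below $c_1 n f(n/3)$, then each deterministic tree in its support has fewer than $c_1 n f(n/3)/f(n/3)=c_1 n$ large queries. This contradicts the bound above for a suitable $c_1$.
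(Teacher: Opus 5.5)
Your proposal is correct and follows essentially the same route as the paper: keep the hard distribution and \Cref{lemma:rank-approx-lb} unchanged, and replace the per-query charge of $\Theta(m)$ with $f(|Q|)\ge f(n/3)$ for every informative query of size greater than $m=n/3$. Your worst-case Yao step in the last paragraph is rigorous, since the paper's lemma bounds the worst-case number of large queries. The intermediate claim that a constant fraction of inputs forces $\Omega(m)$ large queries would need a short truncation argument (cut the tree off after $c_0 m$ large queries and compare success probabilities), but your conclusion does not depend on it.
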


\begin{restatable}{theorem}{thmPartitionGeneralFunctionLB} \label{thm:partition-general-function-lb}
Suppose the cost of querying a set $Q$ is determined by a non-decreasing function $f(|Q|)$. Let $A$ be a (possibly randomized) algorithm that, given a matroid $M$ whose partition size is either $3$ or $4$, determines the partition size of $M$ with probability at least $2/3$. Let $n$ be the number of elements of $M$. Then $A$ incurs $\Omega(n \cdot f(n/12))$ query cost in the worst case.
\end{restatable}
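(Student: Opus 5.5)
The plan is to reuse the hard family from the proof of \Cref{thm:partition} without changing it, and only redo the final cost accounting with $f$ in place of the identity. Fix $\alpha=3$, so $n=4m$. For a uniformly random partition $S$ of $[n]$ into $m$ parts of size $4$, the instance is either $Q_{m,3,S}$ (partition size $3$) or its rank-one truncation $Q'_{m,3,S}$ (partition size $4$), each with probability $1/2$. The structural facts are taken directly from that proof:
\begin{itemize}
\item the two matroids differ only on witnesses, i.e. sets of size exactly $m+m/3$ meeting every part of $S$;
\item every set of size at most $l=m/3=n/12$ is independent in both, so such queries reveal nothing about $S$;
\item a fixed set is a witness for only a $\poly(m)\gamma(3)^{-m}$ fraction of partitions, with $\gamma(3)>1$.
\end{itemize}

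By Yao's principle (\Cref{lem:yao}) it suffices to consider deterministic decision trees against this distribution. Call a query large if its size exceeds $n/12$. Small queries always return ``independent'' and can be pruned from the tree. Within any path, the answers to large queries are fixed until a witness is hit, so we can run the same counting argument as before. A tree making at most $q$ large queries on every path examines at most $2^{q+1}$ distinct sets. By a union bound, it finds a witness with probability at most $2^{q+1}\poly(m)\gamma(3)^{-m}$. If the tree never finds a witness, it cannot tell the two cases apart, so it succeeds with probability at most $1/2$ plus this quantity.

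To reach success probability $2/3$, that probability must be at least a constant. This forces $q\ge c\,m$ for some constant $c>0$ (taking $m$ large), on some path, and in fact on the path the true instance follows with constant probability. Each large query has size greater than $n/12$, so by monotonicity of $f$ it costs at least $f(n/12)$ (if sizes are integers, at least $f(\lceil n/12\rceil)\ge f(n/12)$). The expected cost is therefore $\Omega(m)\cdot f(n/12)=\Omega(n\,f(n/12))$, and the worst case is at least the expectation.

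The main obstacle is turning ``some path needs $\Omega(m)$ large queries'' into a bound that holds with constant probability. A tree could put many queries on rarely followed branches while keeping the typical path short. I would handle this by truncation. Take the randomized algorithm, stop it after $cm$ large queries, and output a random guess. If the original algorithm's expected number of large queries were $o(m)$, Markov's inequality says the truncation changes the outcome with probability $o(1)$. The truncated algorithm would then still succeed with probability above $1/2+\Omega(1)$, contradicting the witness-probability bound. So the expected number of large queries is $\Omega(m)$, each costing at least $f(n/12)$, which gives the claimed bound.
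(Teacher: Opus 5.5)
Your proposal is correct and follows the paper's route. It uses the same $\alpha=3$ instances $Q_{m,3,S}$ versus $Q'_{m,3,S}$ and the fact that queries of size at most $m/3=n/12$ are uninformative. It then applies the $2^{q+1}$-node union bound with the witness-fraction estimate to force $\Omega(m)$ large queries, and uses monotonicity of $f$ to charge each of them at least $f(n/12)$.

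The one addition is your Markov truncation step, which the paper does not need. The statement concerns worst-case cost, and the paper's lemma already bounds the maximum number of large queries on any path. Your extra step is still valid, provided the truncation threshold $cm$ has $c<\lg\gamma(3)$, and it strengthens the conclusion to a lower bound on expected cost.
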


Notably, if $f(n/12) = \Theta(f(n))$ (e.g., if $f$ is a polynomial), these lower bounds simplify to $\Omega(n \cdot f(n))$.

\section{Related Work}
Matroids are a central abstraction in theoretical computer science, arising in diverse areas such as combinatorial optimization, approximation algorithms, and complexity theory. They have been studied extensively across a variety of settings including online algorithms, dynamic algorithms, and parallel computation~\cite{reif1980random,
      gabow1979efficient,
      gabow1988forests,
      kleinberg2012matroid,
      ehsani2018prophet,
      babaioff2018matroid,
      chakrabarty2019faster,
      dutting2023fully,
      banihashem2024dynamic,
      buchbinder2024deterministic,
      khanna2025parallel}.
The study of efficient algorithms for matroids has traditionally focused on oracle-based query models. The standard model is the independence oracle, which allows one to determine in $\mathcal{O}(1)$ time whether a set is independent. A stronger variant is the rank oracle, which instead returns the rank of a queried set (e.g., see~\cite{lee2015faster, chakrabarty2019faster}). While these abstractions enable clean analyses, their limitation is that in practice verifying independence requires operations scaling with the size of the set. In contrast, we explicitly account for this issue by assuming that an independence query on $S$ takes $\Theta(|S|)$ time, and we show that such oracles can be implemented for natural classes of matroids.

A closely related work studies dynamic oracles~\cite{blikstad2023fast}, where queries can be answered in constant time provided they differ from a previously queried set by the addition or removal of a single element. 
This requires the oracle to retain state across queries, and can lead to different behavior from the stateless model we study here. For example, the standard greedy algorithm for finding a basis can have much smaller cost in such a dynamic model, while it has quadratic query cost in our model. Thus, our lower-bound proofs are specific to the stateless size-sensitive model and do not directly extend to dynamic oracles. In our view, a memoryless model--where query cost depends only on the size of the set--is also natural and realistic, especially in distributed or parallel systems where different queries may be processed independently or in different machines. Stateless interaction also aligns more closely with lightweight frameworks such as REST APIs.

Finally, many important combinatorial structures can be expressed as special cases of matroids, including graphic and laminar matroids. Corresponding problems in these settings, such as computing the arboricity of a graph (which corresponds to the partition size in the associated matroid) have been widely studied~\cite{gabow1988forests,
      gabow1998algorithms,
      eden2020faster,
      eden2020testing,
      ghaffari2024dynamic,
      de2025tree}. Since we show a linear-time oracle can indeed be implemented for numerous matroid classes, algorithms developed in our framework for general matroids immediately yield algorithms for these special cases as well, with the same running time. 

\section{Preliminaries} \label{section:preliminaries}
We use standard set and asymptotic notation. For a positive integer $n$ we let $[n]=\{1,2,\dots,n\}$. Multinomial and binomial coefficients are written in the usual way, e.g., $\binom{n}{k}$ and $\binom{n}{k_1,\dots,k_t}$. We abbreviate polynomial factors by $\poly(\cdot)$; for example $\poly(n)$ denotes an expression polynomial in $n$.

Stirling's approximation will be used in counting arguments; we record the form we use:
\begin{align*}
n! = \Theta\left(\sqrt{n}\left(\frac{n}{e}\right)^n\right) = \poly(n)\left(\frac{n}{e}\right)^n.
\end{align*}

A \emph{matroid} is a pair $M=(E,\I)$ where $E$ is a finite ground set and $\I\subseteq 2^E$ is a family of \emph{independent} sets satisfying:
(i) $\emptyset\in\I$,
(ii) (downward-closed) if $I\in\I$ and $J\subseteq I$ then $J\in\I$,
(iii) (exchange) if $I,J\in\I$ and $|I|<|J|$ then there exists $e\in J\setminus I$ with $I\cup{e}\in\I$.
The \emph{rank} of a matroid $M$, denoted $\rank(M)$, is the maximum size of an independent set in $\I$. A \emph{basis} of $M$ is an independent set of size $\rank(M)$. A \emph{circuit} is a minimal dependent set $C \subseteq E$; that is, $C \notin \I$, but for every element $e \in C$, the set $C \setminus \{e\}$ is independent. For a basis $B$ and an element $e \notin B$, the \emph{fundamental circuit} of $e$ with respect to $B$ is the unique circuit contained in $B \cup \{e\}$. The maximum size of any circuit in $M$ is the \emph{circumference} of $M$. For a set $S \subseteq E$, an element $e$ is in the \emph{span} of $S$ if $e \in S$ or if adding $e$ to $S$ creates a circuit containing $e$.

We recall several standard matroid constructions used in the paper:
\begin{itemize}
\item The \emph{uniform matroid} $U_{r,E}$ of rank $r$ on ground set $E$ has independent sets ${I\subseteq E:|I|\le r}$.
\item The \emph{free matroid} on a subset $S\subseteq E$ is the matroid on ground set $S$ whose family of independent sets is $2^S$ (every subset is independent).
\item A \emph{partition matroid} induced by a partition $E=\bigcup_{j} E_j$ and capacities $c_j$ has independent sets $I$ with $|I\cap E_j|\le c_j$ for all $j$. In particular, the partition matroid we use has capacity $1$ on each part.
\item The \emph{matroid union} of two matroids $M_1=(E_1,\I_1)$ and $M_2=(E_2,\I_2)$ is the matroid $M_1\vee M_2$ whose elements are $E_1 \cup E_2$, and independent sets are all unions $I_1\cup I_2$ with $I_1\in\I_1$ and $I_2\in\I_2$.
\item The \emph{truncation} of a matroid $M=(E,\I)$ to rank $r$ is the matroid $M^{\le r}=(E,\I^{\le r})$ where $\I^{\le r}=\{I\in\I:|I|\le r\}$. Truncation reduces the rank to at most $r$ and preserves independence for sets of size $\le r$.
\end{itemize}

We formalize the size-sensitive independence oracle used throughout the paper.

\begin{definition}[Size-sensitive independence oracle]
An algorithm accesses a matroid $M=(E,\I)$ via an independence oracle. A query to the oracle is any set $Q\subseteq E$. The oracle returns a Boolean value indicating whether $Q\in\I$. The \emph{size-sensitive cost} of a single query $Q$ is $|Q|$. The total query cost of an algorithm is the sum of the costs of all queries it issues.
\end{definition}

To obtain lower bounds against randomized algorithms, we apply Yao's principle \cite{yao1977probabilistic} in the standard way.

\begin{lemma}[Yao's principle, informal] \label{lem:yao}
To prove a lower bound on the expected cost of any randomized algorithm that succeeds with probability at least $p$ on the worst-case instance, it suffices to exhibit a distribution $\mathcal{D}$ over instances such that every deterministic algorithm that succeeds with probability at least $p$ on $\mathcal{D}$ must incur the claimed cost.
\end{lemma}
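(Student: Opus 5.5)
We would prove the lemma by viewing a randomized algorithm as a probability distribution over deterministic algorithms and averaging over the hard distribution $\mathcal{D}$. Because the lemma is informal, we will establish it in the form the lower-bound proofs actually use. Costs are compared up to constant factors. The success threshold for the deterministic algorithm may drop slightly, say from $p$ to $p-\varepsilon$; the arguments behind \Cref{thm:rank} and \Cref{thm:partition} show that any cheap deterministic algorithm succeeds on the hard distribution with probability only $1/2+o(1)$, so this loss is harmless.

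First, fix a randomized algorithm $R$ that succeeds with probability at least $p$ on every instance and has expected cost at most $C$ on every instance. Fixing its random string $r$ gives a deterministic algorithm $A_r$ (a decision tree over oracle answers), so $R$ is a distribution over the family $\{A_r\}$. Draw $x\sim\mathcal{D}$ independently of $r$ and define
\[
s(r)=\Pr_{x\sim\mathcal{D}}[A_r \text{ succeeds on } x], \qquad c(r)=\mathbb{E}_{x\sim\mathcal{D}}[\mathrm{cost}(A_r,x)].
\]
Swapping the order of expectation (Fubini) and using the worst-case guarantees pointwise in $x$ gives $\mathbb{E}_r[s(r)]\ge p$ and $\mathbb{E}_r[c(r)]\le C$.

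Second, we need a single $r$ that is good in both respects simultaneously; this is the only real obstacle. Since $s(r)\le 1$, a reverse Markov argument gives
\[
\Pr_r[s(r)\ge p-\varepsilon]\ge \frac{\varepsilon}{1-p+\varepsilon}=:\delta>0 .
\]
Ordinary Markov gives $\Pr_r[c(r)> 2C/\delta]\le \delta/2$. Hence some $r$ satisfies both $s(r)\ge p-\varepsilon$ and $c(r)\le 2C/\delta=\mco(C)$, and $A_r$ is a deterministic algorithm that succeeds with probability at least $p-\varepsilon$ on $\mathcal{D}$ at expected cost $\mco(C)$.

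By contraposition, if every deterministic algorithm with success at least $p-\varepsilon$ on $\mathcal{D}$ must have expected cost $\Omega(K)$, then $C=\Omega(K)$. The deterministic lower bounds in the paper are stated for worst-case cost, namely the number of large queries along any root-to-leaf path. To pass from expected to worst-case cost, apply Markov once more: truncating $A_r$ after cost $c(r)/\varepsilon$ and outputting an arbitrary answer lowers its success on $\mathcal{D}$ by at most $\varepsilon$. The result is a deterministic algorithm with worst-case cost $\mco(C)$ and success at least $p-2\varepsilon$ on $\mathcal{D}$, which is exactly the object the counting arguments rule out.
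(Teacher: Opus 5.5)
Your argument is correct. The paper gives no proof of this lemma; it cites Yao and applies the principle ``in the standard way.'' In the proofs of \Cref{thm:rank} and \Cref{thm:partition} it is applied to randomized algorithms with a \emph{worst-case} cost bound. Under that reading (cost at most $C$ for every input and every random string), plain averaging already gives the statement exactly, with no loss. Every $A_r$ inherits the cost bound, and $\mathbb{E}_r[s(r)]\ge p$ yields some $r$ with $s(r)\ge p$.

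Your proof instead handles the expected-cost reading the lemma literally states, where some slack is genuinely unavoidable. For example, take a hidden uniform bit revealed by one query of cost $K$. Querying with probability $1/3$ and otherwise guessing succeeds with probability $2/3$ on every instance at expected cost $K/3$. Yet every deterministic algorithm reaching $2/3$ on the uniform distribution pays $K$. Your reverse-Markov/Markov selection of $r$ followed by truncation supplies exactly the needed slack. As you note, it is harmless here because the counting lemmas cap any deterministic algorithm with $o(m)$ large queries at success $1/2+o(1)$.

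A somewhat shorter route to the same conclusion is to truncate first. For each fixed input, Markov over $r$ shows that stopping $R$ once its cost would exceed $C/\varepsilon$ lowers its success by at most $\varepsilon$. The truncated algorithm has worst-case cost $C/\varepsilon$ for every random string. Plain averaging then yields a deterministic algorithm with success at least $p-\varepsilon$ on $\mathcal{D}$ and worst-case cost $C/\varepsilon$. This avoids the reverse-Markov step and the $\varepsilon^{-2}$ dependence in your cost bound.
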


We will also use the notion of a decision tree to model deterministic adaptive algorithms: each internal node corresponds to a queried set and branches according to the oracle answer; leaves correspond to outputs.

\section{Rank and Basis}
\label{section:rank-basis}

In this section we study the problem of finding the rank of a matroid under the linear query cost model. Given a matroid $M=(E,\I)$, the goal is to determine the rank of $M$, i.e., the size of a maximum independent set. We note that our lower bounds extend to the problem of finding a basis, as basis construction is at least as hard as rank computation.

A straightforward algorithm maintains an initially empty set $S$ and, for each $e\in E$, queries whether $S\cup{e}$ is independent; if so, it adds $e$ to $S$. Under the linear query cost model, this procedure incurs a total query cost of $\Theta(n^2)$, identifying a basis and thus the rank. We show that this cost is essentially optimal: any algorithm, including randomized algorithms, must incur $\Omega(n^2)$ query cost in the worst case. Moreover, achieving any constant-factor approximation strictly better than $1 \pm 1/40$ still requires $\Omega(n^2)$ query cost; hence permitting approximation does not reduce the asymptotic query cost.

\subsection{Lower Bound}

We restate \Cref{thm:rank} below and prove it in the rest of this section.

\thmrank*

To prove the theorem we consider a family of matroids defined as follows.

\begin{definition}
Given a positive integer $m$ and a subset $S\subseteq [3m]$ with $|S|=m$, let $T=[3m]\setminus S$. Define $M_{m,S}$ to be the matroid on the ground set $[3m]$ obtained as the matroid union of the free matroid on $S$ and the uniform matroid of rank $m$ on $T$.
\end{definition}

\begin{lemma}
\label{lemma:rank-3m-independent}
Every subset of the ground set of $M_{m,S}$ of size at most $m$ is independent.
\end{lemma}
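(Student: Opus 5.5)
The plan is to verify the claim directly from the definition of matroid union. In $M_{m,S}$, a set $I\subseteq[3m]$ is independent exactly when it can be written as $I=I_1\cup I_2$, where $I_1$ is independent in the free matroid on $S$ and $I_2$ is independent in $U_{m,T}$. In other words, $I_1\subseteq S$ is arbitrary and $I_2\subseteq T$ satisfies $|I_2|\le m$. So, given an arbitrary $X\subseteq[3m]$ with $|X|\le m$, it suffices to exhibit such a decomposition.

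I will take the canonical split. Set $I_1=X\cap S$ and $I_2=X\cap T$. Since $S$ and $T=[3m]\setminus S$ partition the ground set, $X=I_1\cup I_2$. Every subset of $S$ is independent in the free matroid on $S$, so $I_1$ qualifies. Moreover $|I_2|\le|X|\le m$, so $I_2$ is independent in $U_{m,T}$. Hence $X$ lies in the union family and is independent in $M_{m,S}$.

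There is essentially no obstacle here. The only point worth noting is that independence in the union matroid requires only the existence of one decomposition. We therefore never need the rank formula for matroid union; the explicit split $X\cap S$, $X\cap T$ already witnesses independence.

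It may also help to remark why this is tight and useful later. The same reasoning shows that a set $X$ is independent if and only if $|X\cap T|\le m$. The union matroid is thus exactly the direct sum of the free matroid on $S$ and $U_{m,T}$, with rank $2m$. Dependence can only appear on sets containing more than $m$ elements of $T$, which in particular requires $|X|>m$.
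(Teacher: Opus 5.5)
Your proposal is correct and matches the paper's approach: the paper states this lemma without a written proof, treating it as immediate from the definition of matroid union. Your explicit split $X = (X\cap S)\cup(X\cap T)$ with $|X\cap T|\le m$ is exactly that verification. Your closing remark that $X$ is independent iff $|X\cap T|\le m$ is also consistent with how the paper later characterizes witnesses.
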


An immediate consequence is that, when $m$ is known but $S$ is unknown, any query of size at most $m$ can never distinguish among different choices of $S$; such queries reveal no new information about $S$. Hence only queries of size greater than $m$ are potentially informative. Since $n=3m$, any non-redundant query has size at least $m=\Theta(n)$. We will next consider truncations of $M_{m,S}$.

\begin{definition}
\label{def:rank-trunc}
Let $m,S$ be as above and let $\epsilon\in(0,1)$ be such that $\epsilon m$ is an integer. Define $M'_{m,S,\epsilon}$ to be the truncation of $M_{m,S}$ to rank $2m-\epsilon m$. That is, the independent sets of $M'_{m,S,\epsilon}$ are exactly those independent sets of $M_{m,S}$ whose size is at most $2m-\epsilon m$.
\end{definition}

Note that $M_{m,S}$ has rank $2m$, so $M'_{m,S,\epsilon}$ reduces the rank by $\epsilon m$. To formalize the difficulty of distinguishing $M_{m,S}$ from its truncation, we introduce the notion of a \emph{witness} set.

\begin{definition}
\label{def:rank-witness}
Given $m$, $\epsilon$, and $S$ as above, a set $W\subseteq [3m]$ is a \emph{witness} if $W$ is independent in $M_{m,S}$ but not independent in $M'_{m,S,\epsilon}$.
\end{definition}

\begin{lemma}
\label{lemma:rank-w-witness}
A set $W$ is a witness for $S$ if and only if $|W|>2m-\epsilon m$ and $|W\setminus S|\le m$.
\end{lemma}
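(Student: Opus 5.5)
The plan is to first characterize independence in $M_{m,S}$ exactly, and then read off the witness condition from the definition of truncation. The key claim is that a set $W\subseteq[3m]$ is independent in $M_{m,S}$ if and only if $|W\setminus S|\le m$.

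For the forward direction of this claim, suppose $W$ is independent in the matroid union. Then $W=I_1\cup I_2$ with $I_1\subseteq S$ and $I_2\subseteq T$, $|I_2|\le m$. Since $I_1\subseteq S$, we get $W\setminus S\subseteq I_2$, and hence $|W\setminus S|\le m$.

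For the converse, suppose $|W\setminus S|\le m$. Set $I_1=W\cap S$, which is independent in the free matroid on $S$. Set $I_2=W\setminus S=W\cap T$, which has size at most $m$ and so is independent in the uniform matroid of rank $m$ on $T$. Their union is $W$, so $W$ is independent in $M_{m,S}$.

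Now apply \Cref{def:rank-trunc}. A set is independent in $M'_{m,S,\epsilon}$ exactly when it is independent in $M_{m,S}$ and has size at most $2m-\epsilon m$. Therefore $W$ is independent in $M_{m,S}$ but not in $M'_{m,S,\epsilon}$ precisely when two things hold: $|W\setminus S|\le m$, which is independence in $M_{m,S}$, and $|W|>2m-\epsilon m$, which is the failure of the size bound. This is exactly the statement of the lemma.

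Both directions are routine unpacking of definitions, so I do not expect a real obstacle. The only point needing care is to use the matroid-union definition as given in \Cref{section:preliminaries}, where $I_1$ and $I_2$ are taken from the two component matroids on $S$ and $T$ respectively.
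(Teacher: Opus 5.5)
Your proof is correct and matches the paper's intended argument. The paper states this lemma without a separate proof, treating it as immediate from the definitions; your characterization that $W$ is independent in $M_{m,S}$ iff $|W\setminus S|\le m$ (via $I_1=W\cap S$, $I_2=W\cap T$), followed by reading off the truncation condition, is exactly that unpacking.
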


Thus any witness must be larger than $2m-\epsilon m$ but may contain at most $m$ elements outside $S$.

We next bound, for a fixed $W$, how many choices of $S$ admit $W$ as a witness.

\begin{restatable}{lemma}{lemmaRankWWitnessCount}
\label{lemma:rank-w-witness-count}
Fix $m,\epsilon$ and let $W\subseteq [3m]$ be a witness. If $|W|=2m-\delta m$ for some $\delta\ge 0$, then $\delta<\epsilon$, and the number of sets $S$ of size $m$ for which $W$ is a witness is at most

$$
\binom{2m-\delta m}{m-\delta m}\binom{2m+\delta m}{\delta m}.
$$

In particular, for every $\delta\in [0,\epsilon)$ this quantity is at most $\binom{2m}{m}\binom{2m+\epsilon m}{\epsilon m}$.
\end{restatable}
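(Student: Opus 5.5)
We obtain everything from the characterization in \Cref{lemma:rank-w-witness}. Since $W$ is a witness, $|W|>2m-\epsilon m$; writing $|W|=2m-\delta m$ gives $\delta<\epsilon$ immediately. Also $W$ is independent in $M_{m,S}$, so $|W|\le 2m$ and $\delta\ge 0$. It remains to count the $m$-subsets $S$ with $|W\setminus S|\le m$, which is the same as $|W\cap S|\ge |W|-m=m-\delta m$.

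For the count, we decompose $S$ according to its intersection with $W$. Every admissible $S$ contains at least $m-\delta m$ elements of $W$. We charge each such $S$ to a pair $(A,B)$, where $A\subseteq W\cap S$ has size exactly $m-\delta m$ and $B=S\setminus A$ has size $\delta m$. We fix the rule that $A$ consists of the $m-\delta m$ smallest elements of $W\cap S$, so that $S\mapsto(A,B)$ is injective: $S=A\cup B$ is recovered from the pair. Here $A$ ranges over $(m-\delta m)$-subsets of $W$, giving at most $\binom{2m-\delta m}{m-\delta m}$ choices. The set $B$ is a $\delta m$-subset of $[3m]\setminus A$, which has size $3m-(m-\delta m)=2m+\delta m$, giving at most $\binom{2m+\delta m}{\delta m}$ choices. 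Multiplying the two counts yields the first bound. (If $\delta m$ is not an integer, one replaces it with the appropriate ceiling; the argument is unchanged.)

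For the uniform bound over $\delta\in[0,\epsilon)$, we handle the two factors separately.
\begin{itemize}
\item For the first factor, $\binom{2m-\delta m}{m-\delta m}=\binom{2m-\delta m}{m}\le\binom{2m}{m}$, because $\binom{N}{m}$ is monotone in $N$.
\item For the second factor, $\binom{2m+\delta m}{\delta m}=\binom{m+(m+\delta m)}{m}$. Here we write $\binom{2m+\delta m}{\delta m}=\binom{2m+\delta m}{2m}$, which is nondecreasing in $\delta$ since both the upper index grows and $\binom{N}{2m}$ is increasing in $N$. It is therefore at most $\binom{2m+\epsilon m}{\epsilon m}$.
\end{itemize}

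The only delicate point is making the charging map injective, which the canonical choice of $A$ handles. Everything else is monotonicity of binomial coefficients.
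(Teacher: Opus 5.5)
Your proof is correct and follows the paper's argument: $m-\delta m$ elements of $S$ are chosen inside $W$ and $\delta m$ more from the remaining $2m+\delta m$ elements. Your canonical choice of $A$ makes the map injective where the paper simply tolerates overcounting, and your monotonicity step (first factor bounded at $\delta=0$, second at $\delta=\epsilon$) is stated more precisely than the paper's ``replace $\delta$ by $\epsilon$''. One stray slip: the identity $\binom{2m+\delta m}{\delta m}=\binom{m+(m+\delta m)}{m}$ is false (its right side is $\binom{2m+\delta m}{m}$), but it is harmless since the correct identity $\binom{2m+\delta m}{\delta m}=\binom{2m+\delta m}{2m}$ carries the argument; likewise the ceiling remark is unnecessary, because $\delta m=2m-|W|$ is automatically an integer.
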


Define a distribution $\mathcal{D}_{m,\epsilon}$ over instances as follows: choose $S$ uniformly at random from all $m$-subsets of $[3m]$, and then with probability $1/2$ take the instance to be $M_{m,S}$ and with probability $1/2$ take it to be $M'_{m,S,\epsilon}$.

The following lemma quantifies the inability of deterministic algorithms that make few useful queries to succeed on $\mathcal{D}_{m,\epsilon}$.

\begin{restatable}{lemma}{lemmaRankDMEpsProb} \label{lemma:rank-d-m-eps-prob}
Let $A$ be any deterministic algorithm that, when run on an instance drawn from $\mathcal{D}_{m,\epsilon}$, makes at most $q$ queries of size greater than $m$ (all other queries may be ignored as they yield no information). If $A$ attempts to approximate the rank to within a factor of $\epsilon/4$, then $A$ succeeds with probability at most

$$
\frac 1 2 + \frac{2^q\cdot \binom {2m} m \binom {2m + \epsilon m} {2m}}{\binom{3m}{m}}.
$$
\end{restatable}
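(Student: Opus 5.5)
We follow the standard ``coupling along the all-yes path'' argument. Fix a deterministic algorithm $A$, viewed as a decision tree, and ignore queries of size at most $m$: by \Cref{lemma:rank-3m-independent} they answer ``independent'' in both $M_{m,S}$ and $M'_{m,S,\epsilon}$ for every $S$. Since truncation only removes independent sets, any query $Q$ that is independent in $M'_{m,S,\epsilon}$ is independent in $M_{m,S}$. So the two instances with the same $S$ give different answers to $Q$ exactly when $Q$ is a witness for $S$ in the sense of \Cref{def:rank-witness}.

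The key step is to note that as long as no witness has been queried, the runs of $A$ on $M_{m,S}$ and $M'_{m,S,\epsilon}$ follow the same path and give the same output. The ranks differ: they are $2m$ and $2m-\epsilon m$. For a suitable approximation tolerance (the factor $\epsilon/4$ is chosen so that the two acceptable output intervals around $2m$ and $(2-\epsilon)m$ are disjoint), a single output can be correct for at most one of the two. Conditioned on $S$ such that no witness is ever queried along the common path, $A$ therefore succeeds with probability at most $1/2$ over the fair coin. Hence
\[
\Pr[\text{success}] \le \tfrac12 + \Pr_S[\text{some queried set on the relevant path is a witness for } S].
\]

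To bound the last probability without conditioning on adaptivity, we enumerate every set $A$ could possibly query. The decision tree restricted to large queries has depth at most $q$, so it has fewer than $2^{q+1}$ (or, by a careful count, at most $2^q$) internal nodes. Let $\mathcal{W}$ be the set of queries at these nodes. Any path the algorithm actually follows, on either instance, queries only sets in $\mathcal{W}$. By the coupling above, the first witness queried, if any, lies on the common path and belongs to $\mathcal{W}$. A union bound over $W\in\mathcal{W}$ combined with \Cref{lemma:rank-w-witness-count} bounds the failure-to-be-fooled probability by
\[
|\mathcal{W}| \cdot \frac{\binom{2m}{m}\binom{2m+\epsilon m}{\epsilon m}}{\binom{3m}{m}}.
\]
Since $\binom{2m+\epsilon m}{\epsilon m}=\binom{2m+\epsilon m}{2m}$, this is the claimed expression.

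The main obstacle is the bookkeeping in the coupling step. A witness queried on one branch only matters if it lies on the path both runs share, and to get exactly $2^q$ we count internal nodes of depth less than $q$ (at most $2^q-1$ of them). We also need to check that the approximation notion really separates $2m$ from $(2-\epsilon)m$. We will define success for the factor $\epsilon/4$ as outputting $\hat r$ with $(1-\epsilon/4)r\le \hat r\le (1+\epsilon/4)r$. The two intervals are disjoint because $(1+\epsilon/4)(2-\epsilon)m<(1-\epsilon/4)2m$ for $\epsilon<1$, since the difference is $m(\epsilon-\epsilon^2/4)>0$.
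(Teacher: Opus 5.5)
Your proposal is correct and follows essentially the paper's argument: drop the small queries, couple the two runs for a fixed $S$ until a witness is queried, and union-bound over the tree's queried sets using \Cref{lemma:rank-w-witness-count}. The paper counts up to $2^{q+1}$ tree nodes and uses the sharper combination $\tfrac12(1-p)+p=\tfrac12+\tfrac p2$, whereas you use the cruder $\tfrac12+p$ and compensate by counting only the at most $2^q-1$ internal nodes, which gives the same bound; one small arithmetic slip is that $(1-\epsilon/4)\,2m-(1+\epsilon/4)(2-\epsilon)m = m\epsilon^2/4$, not $m(\epsilon-\epsilon^2/4)$, but this is still positive, so your two acceptance intervals are indeed disjoint.
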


We now convert the bound above into a lower bound on the number of useful queries.

\begin{restatable}{lemma}{lemmaRankApproxLB} \label{lemma:rank-approx-lb}
Fix $\epsilon\le 1/10$. Any deterministic algorithm $A$ that, on instances drawn from $\mathcal{D}_{m,\epsilon}$, outputs a rank estimate that is within a factor of $\epsilon/4$ with probability at least $2/3$ must make $\Omega(m)$ queries of size more than $m$.
\end{restatable}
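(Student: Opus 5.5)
The plan is to combine \Cref{lemma:rank-d-m-eps-prob} with a Stirling estimate showing that the fraction $\binom{2m}{m}\binom{2m+\epsilon m}{2m}/\binom{3m}{m}$ is exponentially small in $m$. Suppose $A$ succeeds with probability at least $2/3$ and makes at most $q$ queries of size more than $m$. By \Cref{lemma:rank-d-m-eps-prob},
\[
\frac 2 3 \le \frac 1 2 + \frac{2^q \binom{2m}{m}\binom{2m+\epsilon m}{2m}}{\binom{3m}{m}},
\]
so $2^q \ge \frac16 \binom{3m}{m}\big/\bigl(\binom{2m}{m}\binom{2m+\epsilon m}{\epsilon m}\bigr)$. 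It therefore suffices to show that this ratio is at least $\gamma^m/\poly(m)$ for some constant $\gamma>1$ whenever $\epsilon\le 1/10$. This gives $q\ge m\log_2\gamma - \mco(\log m)=\Omega(m)$.

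For the estimate, we apply the Stirling form recorded in \Cref{section:preliminaries} to each binomial coefficient, writing it as $\poly(m)\,2^{m H(\cdot)}$ with binary entropy $H$ scaled appropriately:
\begin{itemize}
\item $\binom{3m}{m}=\poly(m)\,2^{3m H(1/3)}$, and $3H(1/3)=\log_2 3+ \tfrac23\cdot 1\cdot\ldots$, which numerically equals $3\log_2 3-2\approx 2.755$;
\item $\binom{2m}{m}=\poly(m)\,2^{2m}$;
\item $\binom{2m+\epsilon m}{\epsilon m}=\poly(m)\,2^{(2+\epsilon)m H(\epsilon/(2+\epsilon))}$.
\end{itemize}
The exponent of the ratio per $m$ is then $3\log_2 3 - 2 - 2 - (2+\epsilon)H\!\left(\frac{\epsilon}{2+\epsilon}\right)\approx 0.755 - g(\epsilon)$.

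The function $g(\epsilon)=(2+\epsilon)H(\epsilon/(2+\epsilon))$ is increasing in $\epsilon$, so it suffices to check $\epsilon=1/10$. There $\epsilon/(2+\epsilon)=1/21$, and $H(1/21)\approx 0.276$, giving $g(1/10)\approx 2.1\cdot 0.276\approx 0.58<0.755$. So $\gamma\approx 2^{0.17}>1$ works, uniformly for all $\epsilon\le 1/10$.

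The polynomial factors are absorbed into the $\mco(\log m)$ term. Rather than cite entropy as a black box, I would derive these forms directly from $n!=\poly(n)(n/e)^n$, writing $\binom{a}{b}=\poly(a)\,a^a/(b^b(a-b)^{a-b})$.

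The main obstacle is this numeric verification that the exponent stays strictly positive. The constant $1/10$ is presumably chosen so that it holds with some slack. The rest is bookkeeping: the terms of order $\epsilon m$ must be integers, which is fine by the assumption in \Cref{def:rank-trunc}, and for small $m$ the bound $\Omega(m)$ holds trivially.
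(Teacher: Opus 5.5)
Your proof is correct and matches the paper's: both plug \Cref{lemma:rank-d-m-eps-prob} into a Stirling estimate, and your entropy exponent $3\log_2 3-4-(2+\epsilon)H\bigl(\frac{\epsilon}{2+\epsilon}\bigr)$ is exactly $\log_2$ of the paper's base $\frac{27\,\epsilon^{\epsilon}}{4(2+\epsilon)^{2+\epsilon}}\approx 1.128$ at $\epsilon=1/10$. Your monotonicity of $g$ (its derivative is $\log_2\frac{2+\epsilon}{\epsilon}>0$) supplies the ``for all $\epsilon\le 1/10$'' step that the paper leaves implicit; you should also tidy the garbled intermediate expression for $3H(1/3)$, although its stated value $3\log_2 3-2$ is correct.
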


Finally, apply Yao's principle to obtain \Cref{thm:rank}.

\begin{proof}[Proof of \Cref{thm:rank}]
Suppose for contradiction that there exists a randomized algorithm that, with probability at least $2/3$, approximates the rank within factor $1/40$ while paying $o(n^2)$ query cost in the worst case. By Yao's principle there must then exist a deterministic algorithm that achieves success probability at least $2/3$ on the distribution $\mathcal{D}_{m,\epsilon}$ while paying $o(n^2)$ worst-case query cost, for suitable choice of parameters. Take $\epsilon=1/10$ and $n=3m$. By \Cref{lemma:rank-approx-lb}, any deterministic algorithm that succeeds on $\mathcal{D}_{m,\epsilon}$ with probability at least $2/3$ must make $\Omega(m)$ queries of size more than $m$, and therefore pay query cost $\Omega(m^2)=\Omega(n^2)$. This contradicts the assumption that the randomized algorithm pays $o(n^2)$ query cost; hence no such randomized algorithm exists.
\end{proof}

\subsection{Better Algorithms for Bounded Circumference} \label{section:bounded-circ}

In this section, we consider the problem of finding a maximum-weight basis. Given a matroid $M = (E, \mathcal{I})$ with weights $w_e$ for each $e \in E$, the goal is to identify a basis of $M$ that maximizes the total weight.

The main obstacle that causes quadratic query cost in the size-sensitive query model is the need to make large queries to determine dependence. For example, an element $e$ not in the maximum-weight basis may only be spanned by a large set of other elements. Formally, this means that for some independent set $I$ not containing $e$, the fundamental circuit of $e$ with respect to $I$ is large. This motivates the study of whether faster algorithms are possible when all circuits are small, that is, when the circumference is bounded.

\emph{Circumference} is defined as the maximum size of a circuit in the matroid. We present an algorithm to compute a maximum-weight basis when this parameter is bounded. This result contrasts with \Cref{thm:rank}, which demonstrates that estimating the rank in the general case requires quadratic query cost.

Without loss of generality, we assume all weights are pairwise distinct. We can enforce this by ordering the elements $e_1, \dots, e_n$ and mapping each weight $w_{e_i}$ to $W \cdot w_{e_i} + i$, where $W$ is sufficiently large. This transformation ensures the maximum-weight basis is unique.

We provide \Cref{algorithm:basis-bounded-circ}. The algorithm begins with the full set of elements and iteratively removes those not in the optimal basis. In each step, it samples a random subset $S$. If $S$ is dependent, the algorithm sorts the items by decreasing weight and uses binary search to identify the shortest prefix that is dependent. The last element of this prefix is then removed. We repeat this random sampling to discard most incorrect items. Finally, we apply this same removal procedure to the remaining set until we obtain the final basis.

In the rest of this section we prove \Cref{thm:basis-bounded-circ}, restated below.

\thmbasisboundedcirc*

\begin{algorithm}[tb]
   \caption{Algorithm for maximum-weight basis in matroids with circumference at most $c$.}
   \label{algorithm:basis-bounded-circ}
\begin{algorithmic}
   \STATE {\bfseries Input:} Matroid $M$ with circumference at most $c$, set of elements $E = \{e_1, \dots, e_n\}$, and distinct weights $w_e$ for each $e \in E$.
   \STATE {\bfseries Output:} The maximum-weight basis.
   \STATE Initialize $B\leftarrow E$.
   \FOR{$t = 1,\ldots, n\ln n$}
      \STATE Construct $S\subseteq B$ by inserting each $e\in B$ independently with probability $n^{-1/c}$.
      \WHILE{$S$ is not independent}
         \STATE Order $S$ by decreasing weight as $s_1,\ldots,s_{|S|}$.
         \STATE Binary search to find the least $j$ such that $\{s_1,\ldots,s_j\}$ is dependent.
         \STATE Remove $s_j$ from both $S$ and $B$.
      \ENDWHILE
   \ENDFOR
   \WHILE{$B$ is not independent}
      \STATE Order $B$ by decreasing weight as $b_1,\ldots,b_{|B|}$.
      \STATE Binary search to find the least $j$ such that $\{b_1,\ldots,b_j\}$ is dependent.
      \STATE Remove $b_j$ from $B$.
   \ENDWHILE
   \STATE \textbf{return} $B$.
\end{algorithmic}
\end{algorithm}

\begin{proof}[Proof of \Cref{thm:basis-bounded-circ}]
Let $B^*$ denote the maximum-weight basis and let $D=E\setminus B^*$ be the set of elements not in $B^*$. For each $d\in D$ fix a fundamental circuit $C_d$ of $d$ with respect to $B^*$; by the circumference bound we have $|C_d|\le c$. Moreover, every element of $C_d\setminus\{d\}$ has weight larger than the weight of $d$.

Fix $d\in D$. In any iteration of the outer for-loop, each element of $B$ is inserted into $S$ independently with probability $n^{-1/c}$, so the probability that all elements of $C_d$ appear in $S$ is at least
\begin{align*}
\Pr\big[C_d\subseteq S\big] = \big(n^{-1/c}\big)^{|C_d|} \ge n^{-1}.
\end{align*}
If $C_d\subseteq S$, then when $S$ is ordered by decreasing weight, the element $d$ is the least-weight element of the dependent prefix ${s_1,\dots,s_j}$ found by the binary search, and hence $d$ will be removed from $B$ during that iteration. Therefore, the probability that $d$ survives all $n\ln n$ independent iterations of the for-loop is at most
\begin{align*}
\big(1-n^{-1}\big)^{n\ln n} \le e^{-\ln n} = \frac{1}{n}.
\end{align*}
By linearity of expectation, the expected number of elements of $D$ that remain in $B$ after the for-loop is $1$.

We next bound the expected query cost. Consider queries performed during the for-loop. For a fixed iteration, the random set $S$ has expected size $|S|=|B|\cdot n^{-1/c}\le n^{1-1/c}$, so each independence query on a subset of $S$ has expected cost $n^{1-1/c}$. The for-loop makes one initial independence query per iteration to test $S$, and each time the inner while-loop finds a dependent prefix it performs $O(\log |S|)\le O(\log n)$ independence queries for the binary search and one extra query to re-evaluate the while condition. Each removal performed by the inner loop corresponds to removing a distinct element from $B$, and thus across the entire execution the inner loop removes at most $n$ elements. Hence the total number of independence queries issued during the for-loop is $O(n\log n)$, each with expected size at most $n^{1-1/c}$. Therefore, the expected total query cost contributed by the for-loop is $O\big(n^{2-1/c}\log n\big)$.

Finally, consider the final while-loop. By the expectation bound above, the number of elements of $B$ that are outside $B^*$ after the for-loop is $1$ in expectation. Each iteration of the final while-loop removes one element from $B$ and performs $O(\log n)$ independence queries on prefixes of $B$, each of cost at most $n$. Hence, the expected query cost of the final loop is $O(n\log n)$, which is dominated by the cost from the for-loop.

Combining these bounds yields the claimed expected total query cost $O\big(n^{2-1/c}\log n\big)$. This completes the proof.

We note that the correctness follows because an element $d$ is not in $B^*$ if and only if it is lightest member of a circuit. Throughout the algorithm, we only discard elements with this property, and thus the final basis is exactly $B^*$.
\end{proof}

\section{Matroid Partitioning}
\label{section:partition}

In this section we consider the matroid partitioning problem: given a matroid $M = (S,\I)$, find the minimum integer $\alpha$ such that $S$ can be partitioned into subsets $S_1,\dots,S_\alpha$ with each $S_i$ independent.

\subsection{Lower Bound}
We prove \Cref{thm:partition}, restated below.

\thmpartition*

Hence computing the partition size requires quadratic query cost; in particular, even a $(1+\epsilon)$-approximation for any constant $\epsilon < 1/3$ requires quadratic query cost. The structure of the proof is similar to the lower bound for estimating the rank in \Cref{thm:rank}; the main difference is the matroid family used to prove the bound. We begin with a simple partition matroid.

\begin{definition}
Given nonnegative integers $n,m$ and a partition of $[n]$ into parts $S=(S_1,\dots,S_m)$, let $P_S$ be the partition matroid in which a set $I$ is independent iff it contains at most one element from each $S_j$.
\end{definition}

We will consider positive integers $m,\alpha$ with $n=(\alpha+1)m$ and where all parts $S_j$ have equal size $\alpha+1$. Such a $P_S$ can be partitioned into $\alpha+1$ independent parts. The next step is a transformation that makes all sufficiently small sets independent.

\begin{restatable}{lemma}{lemmaLMatroid}
\label{lemma:l-matroid}
Let $M=(E,\I)$ be a matroid and let $l$ be an integer. Define $M'=(E,\I')$ by declaring $S\subseteq E$ to be in $\I'$ iff there exists $L\subseteq S$ with $|L|\le l$ and $S\setminus L\in\I$. Then $M'$ is a matroid.
\end{restatable}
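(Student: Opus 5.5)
The cleanest route is to recognize $M'$ as a matroid union and inherit the axioms. Let $U_{l,E}$ denote the uniform matroid of rank $l$ on $E$. I claim that $\I'$ is exactly the family of independent sets of $M\vee U_{l,E}$. Indeed, if $S\in\I'$ with witness $L\subseteq S$, $|L|\le l$, $S\setminus L\in\I$, then $S=(S\setminus L)\cup L$ is a union of an independent set of $M$ and an independent set of $U_{l,E}$. Conversely, if $S=I_1\cup I_2$ with $I_1\in\I$ and $|I_2|\le l$, set $L=I_2\setminus I_1\subseteq S$. Then $|L|\le l$ and $S\setminus L=I_1\in\I$. Once this identification is made, the lemma follows from the classical theorem (Nash-Williams / Edmonds) that the union of two matroids is a matroid. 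This theorem is standard, and the paper already uses matroid unions in the definition of $M_{m,S}$.

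If a self-contained argument is preferred, I would verify the three axioms directly. (i) $\emptyset\in\I'$ by taking $L=\emptyset$. (ii) Downward closure: if $S\in\I'$ with witness $L$ and $J\subseteq S$, take $L\cap J$. Then $J\setminus(L\cap J)\subseteq S\setminus L$ is independent by downward closure of $\I$.

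(iii) Exchange is the main obstacle. The key reformulation is that $S\in\I'$ iff $r_M(S)\ge |S|-l$, where $r_M$ is the rank function of $M$. To see this, if $S\setminus L\in\I$ with $|L|\le l$, then $r_M(S)\ge |S|-l$; conversely, take a maximal independent subset of $S$ and let $L$ be the remaining elements. Now take $I,J\in\I'$ with $|I|<|J|$. If $|I|<l$, adding any $e\in J\setminus I$ (nonempty since $|J|>|I|$) keeps the defect $|I+e|-r_M(I+e)\le |I|+1\le l$. Otherwise we need $e\in J\setminus I$ with $r_M(I+e)=r_M(I)+1$, or else $|I|-r_M(I)<l$ so that any $e$ works. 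Suppose, for contradiction, that every $e\in J\setminus I$ lies in $\mathrm{span}_M(I)$ and $|I|-r_M(I)=l$. Then $r_M(I\cup J)=r_M(I)=|I|-l<|J|-l\le r_M(J)$, which contradicts monotonicity of rank.

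I would present the union identification as the main proof, since it is short and rests on a standard result, and keep the rank-function verification as the fallback if a self-contained argument is wanted.
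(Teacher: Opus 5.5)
Your proposal is correct, and your main route differs from the paper's. The paper never mentions matroid union here. It verifies only the exchange axiom, directly. It takes an inclusion-minimal $L\subseteq S$ with $|L|\le l$ and $S\setminus L\in\I$, and splits into two cases. If $|L|<l$, any $t\in T\setminus S$ can be absorbed into $L$. If $|L|=l$, then $|S\setminus L|<|T\setminus R|$, so augmentation in $M$ gives $t\in (T\setminus R)\setminus(S\setminus L)$ with $(S\setminus L)\cup\{t\}\in\I$. Minimality of $L$ forces $t\notin L$, hence $t\notin S$.

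An inclusion-minimal such $L$ is exactly the complement of a basis of $S$ in $M$, so $|L|=|S|-r_M(S)$. The paper's argument is therefore your rank-defect fallback in different notation. The only real difference is that you obtain the needed rank increase by contradiction from $r_M(I\cup J)=r_M(I)<r_M(J)$, whereas the paper applies the augmentation axiom to $S\setminus L$ and $T\setminus R$ directly. Your fallback also checks axioms (i) and (ii), which the paper leaves implicit.

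Your primary route identifies $\I'$ with the independent sets of $M\vee U_{l,E}$. It is shorter and more conceptual, and it makes the rank function $r_{M'}(X)=\min\{|X|,\,r_M(X)+l\}$ immediate. The cost is that it hands all the work to the matroid union theorem, whose proof is considerably deeper than this lemma. The paper's earlier use of union for $M_{m,S}$ is over disjoint ground sets, i.e.\ a direct sum, so it does not actually rely on the general union theorem there. The paper's elementary argument keeps the presentation self-contained. (Both versions tacitly need $l\ge 0$, since otherwise $\emptyset\notin\I'$.)
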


This transformation ensures that every set of size at most $l$ in $M'$ is independent, so queries of size at most $l$ provide no information about the structure beyond independence of small sets. We now define the family of matroids we use.

\begin{definition}
Fix positive integers $m,\alpha$ with $\alpha$ dividing $m$, and let $S$ be a partition of $[(\alpha+1)m]$ into $m$ parts of equal size. Let $Q_{m,\alpha,S}$ be the matroid obtained by applying \Cref{lemma:l-matroid} to the partition matroid $P_S$ with $l=\frac m\alpha$.
\end{definition}

Note that $Q_{m,\alpha,S}$ has partition size $\alpha$: within each part of $S$, one can place the first $\alpha$ elements into distinct parts of the independent partition, and the remaining $m$ elements can be distributed evenly, adding $\frac m\alpha$ elements to each part. Since each part is a basis, this shows that the partition size is exactly $\alpha$.

We then apply a truncation similar to that in \Cref{def:rank-trunc}.

\begin{definition}
Fix positive integers $m,\alpha$ with $\alpha$ dividing $m$, and let $S$ be a partition of $[(\alpha+1)m]$ into $m$ equal parts. Let $Q'_{m,\alpha,S}$ be the truncation of $Q_{m,\alpha,S}$ to rank $m+\frac m\alpha-1$.
\end{definition}

Since $Q_{m,\alpha,S}$ has rank $m+\frac m\alpha$, the truncation $Q'_{m,\alpha,S}$ reduces the rank by one. The partition of $Q_{m,\alpha,S}$ into $\alpha$ independent sets was perfect (each part was a basis), so $Q'_{m,\alpha,S}$ is not $\alpha$-partitionable since its bases were truncated. Therefore, any algorithm that distinguishes $\alpha$-partitionable matroids from $(\alpha+1)$-partitionable ones must distinguish $Q_{m,\alpha,S}$ from $Q'_{m,\alpha,S}$. We now define witnesses for this difference.

\begin{definition}
For $W\subseteq[(\alpha+1)m]$, say $W$ is a \emph{witness} to a partition $S$ of $[(\alpha+1)m]$ into $m$ equal parts if $W$ is independent in $Q_{m,\alpha,S}$ but not in $Q'_{m,\alpha,S}$.
\end{definition}

\begin{restatable}{lemma}{lemmaPartitionWWitness}
\label{lemma:partition-w-witness}
$W$ is a witness for $S$ iff $|W|=m+\frac m\alpha$ and $W$ contains at least one element from each part of $S$.
\end{restatable}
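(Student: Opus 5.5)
The plan is to characterize independence in $Q_{m,\alpha,S}$ by a rank computation, and then read off exactly which sets lose independence under the truncation.

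\textbf{Independence in $Q_{m,\alpha,S}$.} For any $W$, the largest subset of $W$ independent in $P_S$ has size $k(W)$, the number of parts $S_j$ that $W$ meets. So we can take $W\setminus L$ to contain exactly one element from each part hit, and put the remaining $|W|-k(W)$ elements into $L$. By the definition in \Cref{lemma:l-matroid}, this gives
\[
W\in\I(Q_{m,\alpha,S}) \iff |W|-k(W)\le \tfrac m\alpha .
\]
To justify minimality: whenever $W\setminus L$ is independent in $P_S$, we have $|W\setminus L|\le k(W)$, hence $|L|\ge |W|-k(W)$. Since $k(W)\le m$, every independent set satisfies $|W|\le m+\frac m\alpha$. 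Equality is attained, for example by one element from each part plus $\frac m\alpha$ extra elements, which confirms that the rank is $m+\frac m\alpha$.

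\textbf{Witnesses.} By definition of truncation, $W$ is a witness iff $W$ is independent in $Q_{m,\alpha,S}$ and $|W|>m+\frac m\alpha-1$. Because every independent set has size at most $m+\frac m\alpha$, this is the same as requiring that $W$ be independent with $|W|=m+\frac m\alpha$.

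For such a $W$, the independence condition reads $m+\frac m\alpha-k(W)\le\frac m\alpha$, i.e. $k(W)\ge m$. Combined with $k(W)\le m$, this forces $k(W)=m$, so $W$ meets every part. Conversely, if $|W|=m+\frac m\alpha$ and $k(W)=m$, then $|W|-k(W)=\frac m\alpha$, so $W$ is independent in $Q_{m,\alpha,S}$ but too large for $Q'_{m,\alpha,S}$, and hence is a witness.

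\textbf{Main point requiring care.} The only nontrivial step is the identity that the minimal $|L|$ equals $|W|-k(W)$. This follows directly from the fact that the rank of $W$ in $P_S$ is $k(W)$, so I do not expect a real obstacle.
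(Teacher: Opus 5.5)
Your proof is correct and follows essentially the same argument as the paper. Both arguments use that $Q_{m,\alpha,S}$ has rank $m+\frac m\alpha$, so a witness must have exactly that size, and then observe that the bounds $|W\setminus L|\le m$ and $|L|\le\frac m\alpha$ must both be tight, which forces $W$ to meet every part. Your general criterion $|W|-k(W)\le\frac m\alpha$ for independence in $Q_{m,\alpha,S}$ is a slightly cleaner packaging, and it also spells out the rank claim and the converse direction, which the paper calls immediate.
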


A single witness is useful for only a small fraction of partitions, as quantified below.

\begin{restatable}{lemma}{lemPartitionWitnessCount} \label{lem:partition-witness-count}
For given $m,\alpha$, a set $W\subseteq [(\alpha+1)m]$ is a witness for at most
\begin{align*}
m!\binom{m+\frac m\alpha}{m}\binom{\alpha m}{\alpha,\dots,\alpha}
\end{align*}
distinct partitions $S$.
\end{restatable}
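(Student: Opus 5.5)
We bound the count by constructing every partition for which $W$ is a witness through a short sequence of choices, and then counting the choices.

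By \Cref{lemma:partition-w-witness}, we may assume $|W|=m+\frac m\alpha$ and that $W$ meets every one of the $m$ parts of $S$; otherwise $W$ is a witness for no partition and the bound holds trivially. The complement $\overline W=[(\alpha+1)m]\setminus W$ then has size $(\alpha+1)m-m-\frac m\alpha=\alpha m-\frac m\alpha$.

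We treat $S=(S_1,\dots,S_m)$ as an ordered tuple of parts. Overcounting is harmless since we only need an upper bound, and unordered partitions are fewer. For each such $S$ we build an injective-enough description as follows.
\begin{itemize}
\item[(i)] Choose a set $R\subseteq W$ of $m$ elements together with a bijection sending each part to one element of $R$, meaning part $S_j$ receives a distinguished representative $r_j\in W\cap S_j$. Selecting the subset $R$ costs $\binom{m+\frac m\alpha}{m}$ choices, and assigning its elements to the $m$ ordered parts costs $m!$ choices.
\item[(ii)] The remaining $\alpha m$ elements, namely $\overline W$ together with the $\frac m\alpha$ leftover elements of $W\setminus R$, must be distributed so that each part receives exactly $\alpha$ additional elements. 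The number of ways to do this is the multinomial $\binom{\alpha m}{\alpha,\dots,\alpha}$.
\end{itemize}

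Every partition $S$ for which $W$ is a witness arises from at least one such sequence of choices. Indeed, $W$ meets every part, so we can take $r_j$ to be any element of $W\cap S_j$. Then $R=\{r_1,\dots,r_m\}$ consists of distinct elements, since the parts are disjoint. The remaining $\alpha$ elements of each part form the distribution in step (ii).

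Multiplying the counts gives at most $m!\binom{m+\frac m\alpha}{m}\binom{\alpha m}{\alpha,\dots,\alpha}$ ordered tuples, and hence at most that many partitions.

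The only subtle point is the overcounting. A part containing several elements of $W$ can be produced by several choices of representative, and ordered tuples overcount unordered partitions. Both effects only make the count larger, so they are acceptable for an upper bound. No step here is a real obstacle; the substantive work is in the later asymptotic comparison of this count against the total number of partitions.
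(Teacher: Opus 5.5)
Your proposal is correct and follows the paper's proof essentially verbatim. Both pick $m$ representatives from $W$ and assign them to the parts ($m!\binom{m+\frac m\alpha}{m}$ ways), then distribute the remaining $\alpha m$ elements via the multinomial, noting that overcounting only helps an upper bound. Your explicit handling of the case $|W|\neq m+\frac m\alpha$ and of ordered versus unordered parts is a small extra bit of care that the paper leaves implicit.
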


The total number of partitions of $[(\alpha+1)m]$ into $m$ equal parts is $\binom{(\alpha+1)m}{\alpha+1,\dots,\alpha+1}$. Combining this with the previous lemma gives the following lemma. For convenience, we define the following function:

\begin{equation}
\gamma(\alpha)=\left(1+\frac{1}{\alpha}\right)^{\alpha-1-\frac{1}{\alpha}}\left(\frac{1}{\alpha}\right)^{\frac{1}{\alpha}}.    
\end{equation}

\begin{restatable}{lemma}{lemmaWitnessFraction} \label{lem:witness_fraction}
For given $m,\alpha$, any set $W\subseteq[(\alpha+1)m]$ is a witness for at most a
\begin{align*}
\gamma(\alpha)^{-m}\poly(m,\alpha)
\end{align*}
fraction of all partitions $S$.
\end{restatable}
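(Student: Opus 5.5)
The plan is to divide the count from \Cref{lem:partition-witness-count} by the total number of equal partitions, $\binom{(\alpha+1)m}{\alpha+1,\dots,\alpha+1} = \frac{((\alpha+1)m)!}{((\alpha+1)!)^m}$, and to estimate every factorial with the Stirling form $n! = \poly(n)(n/e)^n$ from the Preliminaries. All polynomial prefactors are absorbed into $\poly(m,\alpha)$, so only the exponential parts need to be tracked.

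The numerator is
\[
m!\binom{m+\frac m\alpha}{m}\binom{\alpha m}{\alpha,\dots,\alpha}
= \frac{m!\,(m+\frac m\alpha)!\,(\alpha m)!}{m!\,(\frac m\alpha)!\,(\alpha!)^m}
= \frac{(m+\frac m\alpha)!\,(\alpha m)!}{(\frac m\alpha)!\,(\alpha!)^m}.
\]
The ratio with the total count is therefore
\[
\frac{(m+\frac m\alpha)!\,(\alpha m)!\,((\alpha+1)!)^m}{(\frac m\alpha)!\,(\alpha!)^m\,((\alpha+1)m)!}
= \frac{(m+\frac m\alpha)!\,(\alpha m)!\,(\alpha+1)^m}{(\frac m\alpha)!\,((\alpha+1)m)!}.
\]
This cancellation of $(\alpha!)^m$ against $((\alpha+1)!)^m$ is exact, so only four factorials with arguments linear in $m$ remain.

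Next apply Stirling to each of these four factorials. The $e$-powers cancel, because the arguments balance:
\[
\left(m+\tfrac m\alpha\right) + \alpha m - \tfrac m\alpha - (\alpha+1)m = 0.
\]
The powers of $m$ cancel for the same reason. What remains is the $m$-th power of
\[
\left(1+\tfrac1\alpha\right)^{1+\frac1\alpha}\,\alpha^{\alpha}\,(\alpha+1)\,\alpha^{\frac1\alpha}\,(\alpha+1)^{-(\alpha+1)}.
\]
To simplify, write $\alpha^\alpha(\alpha+1)^{-(\alpha+1)} = (\alpha+1)^{-1}(1+\frac1\alpha)^{-\alpha}$. The $(\alpha+1)$ factors then cancel, and the base becomes
\[
\left(1+\tfrac1\alpha\right)^{1+\frac1\alpha-\alpha}\alpha^{\frac1\alpha} = \gamma(\alpha)^{-1}.
\]
So the fraction is $\gamma(\alpha)^{-m}\poly(m,\alpha)$.

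The main obstacle is bookkeeping rather than ideas. One has to get the exponent signs right so that the base matches $\gamma(\alpha)$ exactly. One also has to confirm that the Stirling correction factors $\Theta(\sqrt{n})$ are polynomial in $m$ and $\alpha$; they are, since every argument is at most $(\alpha+1)m$. Edge cases where $m/\alpha$ is small can be handled with the convention $0!=1$, or by using the bound $n!\le \poly(n)(n/e)^n$ in the directions needed.
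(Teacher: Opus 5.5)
Your proposal is correct and follows essentially the same route as the paper. Both divide the witness count from \Cref{lem:partition-witness-count} by the multinomial total, cancel $(\alpha!)^m$ against $((\alpha+1)!)^m$ to leave four factorials, and apply Stirling to obtain the base $\gamma(\alpha)^{-1}$ up to $\poly(m,\alpha)$ factors; the only difference is that the paper writes the reciprocal ratio and shows it equals $\gamma(\alpha)^{m}\poly(m,\alpha)^{-1}$.
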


For $\alpha\ge3$, $\gamma(\alpha)=\left(1+\frac{1}{\alpha}\right)^{\alpha-1-\frac{1}{\alpha}}\left(\frac{1}{\alpha}\right)^{\frac{1}{\alpha}}$ satisfies $\gamma(\alpha)>1.1199>1$, so the fraction covered by a single witness is exponentially small in $m$.

We now apply Yao's principle. Define a distribution $\mathcal{D}_{\alpha,m}$ over instances by choosing a partition $S$ of $[(\alpha+1)m]$ into $m$ equal parts uniformly at random, and then choosing $Q_{m,\alpha,S}$ or $Q'_{m,\alpha,S}$ each with probability $1/2$. The following lemma shows that an algorithm must find a witness to distinguish these two matroids.

\begin{restatable}{lemma}{lemmaPartitionDetProb} \label{lemma:partition-det-prob}
Let $A$ be any deterministic algorithm that makes at most $q$ queries of size greater than $\frac m\alpha$. When the input is sampled from $\mathcal{D}_{\alpha,m}$, $A$ outputs the correct answer with probability at most
\begin{align*}
\frac{1}{2}+\frac{2^q\poly(m, \alpha)}{\gamma(\alpha)^m}.
\end{align*}
\end{restatable}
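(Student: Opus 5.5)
The approach is to reduce the success of $A$ to the event that one of its large queries hits a witness, mirroring \Cref{lemma:rank-d-m-eps-prob}. First, queries of size at most $\frac m\alpha$ return ``independent'' in both $Q_{m,\alpha,S}$ and $Q'_{m,\alpha,S}$ for every $S$, so they carry no information and can be hard-wired. We then model $A$ as a decision tree of depth at most $q$ whose internal nodes are the large queries. The tree has at most $2^q$ root-to-leaf paths and hence at most $2^{q+1}$ distinct queried sets; call this collection $\mathcal{W}$.

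Next, fix the partition $S$ and run $A$ on $Q'_{m,\alpha,S}$. This traces a path, and along it $A$ queries some sets. Suppose none of the sets queried on this path is a witness for $S$. Every query on the path then receives the same answer in $Q_{m,\alpha,S}$ as in $Q'_{m,\alpha,S}$, because the two matroids differ only on witnesses. The remaining sets on which they differ are sets independent in $Q$ but not in $Q'$, which are exactly the witnesses by definition; sets dependent in $Q$ are dependent in both. So by induction along the path, $A$ follows the identical path and produces the identical output on both instances. Since the two instances have different correct answers (partition size $\alpha$ versus $\alpha+1$), $A$ is correct on exactly one of them. Conditioned on such an $S$, the success probability is therefore exactly $1/2$.

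Let $\mathcal{B}$ be the event that some set in $\mathcal{W}$ is a witness for $S$. Then
\[
\Pr[\text{success}] \le \tfrac12\Pr[\bar{\mathcal{B}}] + \Pr[\mathcal{B}] \le \tfrac12 + \Pr[\mathcal{B}].
\]
A union bound over the at most $2^{q+1}$ sets in $\mathcal{W}$, together with \Cref{lem:witness_fraction}, gives
\[
\Pr[\mathcal{B}] \le 2^{q+1}\gamma(\alpha)^{-m}\poly(m,\alpha),
\]
since $S$ is uniform over the equal partitions. The factor $2$ is absorbed into $\poly(m,\alpha)$.

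The one subtle point is that the decision tree is adaptive, so the set of queries actually made depends on $S$. Bounding against the fixed union $\mathcal{W}$ of all queries appearing anywhere in the tree avoids this issue, and it is legitimate because $\mathcal{W}$ is determined by $A$ alone, independent of $S$. A secondary check is that ``same answers implies same path'' requires comparing the $Q'$ path with the $Q$ run step by step. This holds by induction on depth, since at each node the queried set is not a witness and so gets the same answer in both matroids.
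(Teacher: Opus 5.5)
Your proposal is correct and follows essentially the same argument as the paper. You discard the uninformative queries of size at most $\frac m\alpha$, bound the number of distinct sets in the depth-$q$ decision tree by $2^{q+1}$, union-bound using \Cref{lem:witness_fraction}, and observe that if no witness is queried the runs on $Q_{m,\alpha,S}$ and $Q'_{m,\alpha,S}$ coincide, so the success probability is $1/2$. The only difference is cosmetic: the paper computes $\frac12(1-p)+p=\frac12+\frac p2$ exactly, whereas you loosen this to $\frac12+p$ and absorb the resulting factor of $2$ into $\poly(m,\alpha)$.
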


Having the above lemma, we are now ready to prove \Cref{thm:partition}.

\begin{proof}[Proof of \Cref{thm:partition}]
Set $\alpha=3$. By \Cref{lemma:partition-det-prob}, any deterministic algorithm that succeeds with probability at least $\frac{2}{3}$ on $\mathcal{D}_{\alpha,m}$ must make $q$ useful queries satisfying
\begin{align*}
\frac{1}{2}+\frac{2^q\poly(m, \alpha)}{\gamma(\alpha)^m}\ge\frac{2}{3},
\end{align*}
which implies $2^q\ge\frac{1}{6}\poly(m, \alpha)\gamma(\alpha)^m$ and hence $q\ge m\lg(\gamma(\alpha))-o(m)$. Since $\gamma(3)>1$, we have $q=\Theta(m)$. Each useful query has size at least $\frac m\alpha$, so the worst-case query cost is at least $\Omega(m\cdot\frac m\alpha)=\Omega(m^2)$.

By Yao's principle, the same lower bound applies to randomized algorithms: any randomized algorithm that pays $o(m^2)$ query cost would yield, via its best deterministic strategy against $\mathcal{D}_{\alpha,m}$, a deterministic algorithm that succeeds with probability at least $\frac{2}{3}$ while making $o(m)$ useful queries, contradicting the bound above. Translating $m$ back to $n=(\alpha+1)m$ yields the stated $\Omega(n^2)$ worst-case query cost.
\end{proof}

\subsection{Upper Bound}
We will prove \Cref{thm:partition-upp}, restated below.

\thmpartitionupp*

To prove the above, we will use the following result from \citet{quanrud2024faster}.

\begin{lemma}[\citet{quanrud2024faster}, Theorem 4.4] \label{lemma:quanrud-covering}
For integer capacities, one can compute either a covering of $(1 + \epsilon)\lambda$ bases or a certificate of infeasibility for any covering of $(1 - \epsilon)\lambda$ bases in time bounded by $\tmco(n/\epsilon)$ independence queries.
\end{lemma}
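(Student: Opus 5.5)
This lemma is imported from \citet{quanrud2024faster}, so the paper uses it as a black box. If I had to reprove it, I would follow the standard matroid-union / augmenting-path route and argue that an approximate answer needs only short augmenting paths. What follows is a plan, not a verified argument.

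\emph{Reformulation.} Covering $E$ by $k$ bases is equivalent to partitioning $E$ into $k$ independent sets, since independent sets can be extended to bases. That in turn is equivalent to $E$ being independent in the $k$-fold matroid union $M^{\vee k}$. By the matroid union theorem, infeasibility is certified by a set $A\subseteq E$ with $|A|>k\cdot r(A)$. So the task becomes: grow a large common independent structure, namely $k$ disjoint independent sets $I_1,\dots,I_k$, and when growth stops, read off such a set $A$ from the reachable part of the exchange graph.

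\emph{Step 1: phased augmentation.} I would grow the sets by Cunningham-style phases. In each phase, compute BFS distance layers in the exchange graph of $M^{\vee k}$. This graph has an edge $x\to y$ when $I_i-y+x\in\I$ for the relevant $i$. Then augment along a maximal family of disjoint shortest paths. The standard monotonicity argument shows the shortest path length strictly increases across phases.

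\emph{Step 2: stop early.} Halt once the shortest augmenting path exceeds $\Theta(1/\epsilon)$. The usual counting argument (as in Hopcroft--Karp, lifted to matroids) shows that the current union then has size at least $(1-\epsilon)$ times the optimum. Consequently, either at most an $\epsilon$ fraction of elements remain uncovered, and these can be absorbed by adding $\epsilon\lambda$ extra bases, giving a cover by $(1+\epsilon)\lambda$ bases; or the set of elements reachable in the exchange graph yields a set $A$ certifying that no covering by $(1-\epsilon)\lambda$ bases exists. There are $O(1/\epsilon)$ phases in total.

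\emph{Step 3: cost of a phase (the main obstacle).} Each phase must run with $\tmco(n)$ independence queries, even though the exchange graph can have $\Theta(n^2)$ edges. I would first try the binary-search edge discovery technique of \citet{chakrabarty2019faster}. To find an out-neighbor of $x$ inside a candidate subset, query prefixes $I_i-P+x$ and binary search; this costs $O(\log n)$ queries per discovered edge. Each vertex is discovered or deleted at most once per phase, so a phase costs $\tmco(n)$ queries. Combining Step 2 and Step 3 gives the claimed $\tmco(n/\epsilon)$ total. The delicate points are handling the $k$ copies without an extra factor of $k$ and the integer-capacity bookkeeping. I expect these to need Quanrud's specific capacity-scaling ideas and would defer to the original source for them.

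For this paper, the lemma is applied after truncating $M$ to rank $\lceil n/k\rceil$. After truncation every queried set has size $\mco(n/k)$, so the query bound of the lemma translates into the size-sensitive cost bound of \Cref{thm:partition-upp}.
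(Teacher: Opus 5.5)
Your bottom line matches the paper: it gives no proof of this lemma and simply invokes Theorem~4.4 of \citet{quanrud2024faster}, so citing it is all that is needed. Read as a proof, though, your reconstruction has a genuine gap in Step~2. Early stopping does give $|U|\le\epsilon n$ uncovered elements. But the number of independent sets needed to cover $U$ is $\max_{X\subseteq U}\lceil |X|/r(X)\rceil$, and a bound on $|U|$ does not control it.

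Here is a concrete failure. Take layers $D_0,\dots,D_h$ of $\lambda$ elements each ($\lambda,h$ even), put $E_j=D_0\cup\dots\cup D_j$, and declare $I$ independent iff $|I\cap E_j|\le j+1$ for every $j$. Assigning elements round-robin in layer order shows the covering number is exactly $\lambda$. Let $\lambda/2$ of the sets each take one element of $D_0$ and two elements of every even layer $D_2,D_4,\dots$. Let the other $\lambda/2$ sets each take two elements of every odd layer. These sets are independent, and only $\lambda/2$ elements of $D_0$ are uncovered, a $1/(2h+2)$ fraction. Each set is tight at every second level, so inserting an element of $D_i$ can only evict elements of $E_{i+1}$. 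Moreover, only elements of $D_h$ can be inserted freely. Hence every augmenting path has length at least $h$, and your stopping rule is met for $h\approx 1/\epsilon$. Yet the uncovered elements are pairwise parallel, so covering them needs $\lambda/2$ extra bases, not $\epsilon\lambda$.

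Your other branch has a related problem. The reachable set is guaranteed to be a certificate only when no augmenting path exists at all, which early stopping does not ensure. You also give no rule for deciding which branch applies.

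The standard repair is to put the slack in the number of copies rather than in the stopping rule. Maintain $k=(1+\epsilon)\lambda$ (suitably rounded) disjoint independent sets, and let $A_j$ be the elements reachable from uncovered ones within $j$ exchanges. As long as no element of $A_j$ can be inserted into any $I_i$ without an exchange, each $I_i\cap A_{j+1}$ spans $A_j$, so $|A_{j+1}|\ge k\,r(A_j)$. If $\lambda$ bases suffice, then $|A_j|\le\lambda\,r(A_j)$ and the layers grow by a factor $1+\epsilon$. So an augmenting path of length $\mco(\epsilon^{-1}\log n)$ always exists, which is also where the logarithm in $\tmco(n/\epsilon)$ comes from. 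Conversely, a layer that grows by less than this factor satisfies $|A_j|>\lambda\,r(A_j)$, and that layer is the certificate.

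Finally, the point you defer, avoiding a factor of $k$ per phase, is load-bearing for this paper. The paper applies the lemma with $\epsilon=1/(2\lambda)$, so an extra factor of $\lambda$ in the query count would turn the $\tmco(n^2)$ bound of \Cref{thm:partition-upp} into $\tmco(n^2\lambda)$. The cited result handles this with a push-relabel scheme rather than blocking-flow phases.
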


The next simple truncation fact shows that we never need to query sets larger than $\lceil n/k\rceil$ when $k$ is the partition size.

\begin{restatable}{lemma}{lemmaTruncatePreserveK} \label{lemma:truncate-preserve-k}
Let $M=(E,\I)$ be a matroid on $n$ elements that can be partitioned into $k$ independent sets. Let $M'$ be the truncation of $M$ to rank $r=\lceil n/k\rceil$. Then $M'$ can be partitioned into $k$ independent sets as well.
\end{restatable}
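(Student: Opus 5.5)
We have to show: if $E$ partitions into $k$ independent sets of $M$, then $E$ also partitions into $k$ sets each independent in the truncation $M'$, i.e.\ each of size at most $r=\lceil n/k\rceil$. The plan is a rebalancing argument that uses only downward closure and the exchange axiom, with no appeal to matroid union theory.

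Start from a partition $E=S_1\cup\dots\cup S_k$ with every $S_i\in\I$. Among all such partitions, pick one minimizing the potential $\Phi=\sum_i |S_i|^2$. We claim every part of this minimizer satisfies $\bigl||S_i|-|S_j|\bigr|\le 1$.

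To prove the claim, suppose $|S_i|\ge |S_j|+2$ for some $i,j$. Both sets are independent and $|S_j|<|S_i|$, so the exchange axiom gives $e\in S_i\setminus S_j$ with $S_j\cup\{e\}\in\I$. Move $e$ from $S_i$ to $S_j$. The set $S_i\setminus\{e\}$ remains independent by downward closure, so we still have a partition into $k$ independent sets. The potential changes by
\begin{align*}
(|S_i|-1)^2+(|S_j|+1)^2-|S_i|^2-|S_j|^2 = 2(|S_j|-|S_i|)+2 \le -2,
\end{align*}
which contradicts minimality. (Alternatively, one can repeat such moves directly; termination follows because $\Phi$ strictly decreases.)

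Once the parts are balanced, their sizes are integers that differ by at most one and sum to $n$. Hence the largest part has size at most $\lceil n/k\rceil=r$. Every part is then independent in $M$ and has size at most $r$, so it lies in $\I^{\le r}$. This gives a partition of $E$ into $k$ independent sets of $M'$.

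The only real subtlety is ensuring the exchange step yields an element actually outside $S_j$, which is exactly what the exchange axiom provides. Since the parts are disjoint, this simply means $e\in S_i$, so the move is well defined. Beyond that, the argument is routine.
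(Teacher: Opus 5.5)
Your proof is correct and is essentially the paper's argument: both rebalance a given independent partition by using the exchange axiom to move an element from a larger part into a smaller one, with downward closure keeping the shrunken part independent. The paper repeatedly moves elements from a part of size $>r$ to a part of size $<r$ until no part exceeds $r$. Your extremal choice of the partition minimizing $\sum_i |S_i|^2$ packages the same move differently, makes termination explicit, and yields a fully balanced partition.
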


The above lemmas are all the tools needed to prove \Cref{thm:partition-upp}. We refer to \Cref{section:proofs} for the full proof.

\section{Extension to General Cost Functions}
\label{section:general-cost}
Consider a general model where querying a set of size $k$ costs $f(k)$ for a non-decreasing function $f$. The standard greedy algorithm computes a basis using $n$ queries. Since each query has size at most $n$, the total query cost is $O(n \cdot f(n))$.

We show that this greedy algorithm is close to optimal. Recall the hard instances from \Cref{section:rank-basis}, where every set of size $m = n/3$ is independent. Consequently, any query $Q$ with $|Q| \le m$ provides no information. To distinguish between instances, an algorithm must query sets larger than $m$.

For the rank problem, \Cref{lemma:rank-approx-lb} implies that any successful algorithm must perform $\Omega(n)$ queries of size strictly greater than $m$. Thus, the total query cost is lower bounded by $\Omega(n \cdot f(n/3))$.

Similarly, for the partition size, \Cref{lemma:partition-det-prob} requires $\Omega(n)$ queries of size at least $m/\alpha$. Since $\alpha \le 4$, the total query cost is lower bounded by $\Omega(n \cdot f(n/12))$.

The proofs for \Cref{thm:rank-general-function-lb,thm:partition-general-function-lb} follow the proofs of \Cref{thm:rank,thm:partition} closely, with the above mentioned difference in the cost of each useful query.

\section{Conclusion}
We studied matroid algorithms in a model where the cost of an independence query is the size of the queried set. For rank estimation and partition size, we proved nearly tight quadratic upper and lower bounds. We also showed that this barrier can be broken for matroids with bounded circumference.

These results show that accounting for query size can change how the cost of matroid algorithms should be understood, even for basic tasks. Several directions remain open, including identifying other settings where faster algorithms are possible and understanding how size-sensitive query costs affect broader matroid optimization problems.

\section*{Impact Statement}
This paper presents work whose goal is to advance the field of Machine
Learning. There are many potential societal consequences of our work, none of
which we feel must be specifically highlighted here.

\bibliography{references}
\bibliographystyle{icml2026}

%%%%%%%%%%%%%%%%%%%%%%%%%%%%%%%%%%%%%%%%%%%%%%%%%%%%%%%%%%%%%%%%%%%%%%%%%%%%%%%
%%%%%%%%%%%%%%%%%%%%%%%%%%%%%%%%%%%%%%%%%%%%%%%%%%%%%%%%%%%%%%%%%%%%%%%%%%%%%%%
% APPENDIX
%%%%%%%%%%%%%%%%%%%%%%%%%%%%%%%%%%%%%%%%%%%%%%%%%%%%%%%%%%%%%%%%%%%%%%%%%%%%%%%
%%%%%%%%%%%%%%%%%%%%%%%%%%%%%%%%%%%%%%%%%%%%%%%%%%%%%%%%%%%%%%%%%%%%%%%%%%%%%%%
\newpage
\appendix
\onecolumn

\section{Omitted Proofs}
\label{section:proofs}

We provide proofs omitted from the main body in this section. We also restate the theorem/lemma statement first.

\lemmaRankWWitnessCount*
\begin{proof}[Proof of \Cref{lemma:rank-w-witness-count}]
If $W$ is a witness, then by \cref{lemma:rank-w-witness} $|W|>2m-\epsilon m$, so $\delta<\epsilon$. Further, $W$ must contain at most $m$ elements outside $S$, which implies that $S$ contains at least $m-\delta m$ elements from $W$.

To upper bound the number of possible $S$, choose exactly $m-\delta m$ elements of $S$ from $W$ (in $\binom{2m-\delta m}{m-\delta m}$ ways) and call it $X$. We then choose the remaining $\delta m$ elements of $S$ from $[3m] \setminus X$, which has size $3m-(m-\delta m)=2m+\delta m$ (in $\binom{2m+\delta m}{\delta m}$ ways). Note that this will double count some sets $S$, but that is okay since we are providing an upper bound. This yields the stated bound. Finally, each binomial factor is monotone in the appropriate argument, so replacing $\delta$ by $\epsilon$ gives the stated uniform upper bound.
\end{proof}

\lemmaRankDMEpsProb*
\begin{proof}[Proof of \Cref{lemma:rank-d-m-eps-prob}]
The decision process of $A$ can be viewed as a binary tree of depth at most $q$, which has at most $2^{q+1}$ nodes and therefore at most $2^{q+1}$ distinct sets $W$ that $A$ ever queries. By \Cref{lemma:rank-w-witness-count}, each such $W$ is a witness for at most $\binom {2m} m \binom {2m + \epsilon m} {2m}$ different choices of $S$. Hence the total number of $S$ for which $A$ queries some witness is at most $2^{q+1}\binom {2m} m \binom {2m + \epsilon m} {2m}$, so the probability (over the random choice of $S$) that $A$ ever queries a witness is at most $2^{q+1}\cdot\binom {2m} m \binom {2m + \epsilon m} {2m}/\binom{3m}{m}$.

Condition on the event that $A$ does not query any witness for the chosen $S$. In that event the answers to all queries that $A$ sees are identical whether the instance is $M_{m,S}$ or $M'_{m,S,\epsilon}$, so $A$ must output the same rank estimate in both cases. Since the ranks of these two matroids are $2m$ and $2m-\epsilon m$ respectively, any single estimate can be correct for at most one of the two possibilities, since $\frac{2m-\epsilon m}{2m} = 1-\epsilon/2 < 1-\epsilon/4$. Thus conditioned on this event $A$ has success probability at most $1/2$. Combining the two cases yields that $A$'s overall success probability is at most

$$
\frac{2^{q+1}\binom {2m} m \binom {2m + \epsilon m} {2m}}{\binom{3m}{m}} + \Big(1-\frac{2^{q+1}\binom {2m} m \binom {2m + \epsilon m} {2m}}{\binom{3m}{m}}\Big)\cdot\frac12
= \frac12 + \frac{2^q \binom {2m} m \binom {2m + \epsilon m} {2m}}{\binom{3m}{m}},
$$

as claimed.
\end{proof}

\lemmaRankApproxLB*
\begin{proof}[Proof of \Cref{lemma:rank-approx-lb}]
By \Cref{lemma:rank-d-m-eps-prob} and the requirement that the success probability is at least $2/3$, the number $q$ of useful queries must satisfy

$$
\frac12 + \frac{2^q \binom {2m} m \binom {2m + \epsilon m} {2m}}{\binom{3m}{m}} \ge \frac23,
$$

so

$$
2^q \ge \frac{1}{6}\cdot\frac{\binom{3m}{m}}{\binom {2m} m \binom {2m + \epsilon m} {2m}}.
$$

We now analyze the term $T = \frac {\binom {3m} m} {\binom {2m} m \binom {2m + \epsilon m} {2m}}$ asymptotically. We first algebraically manipulate it as follows:
\begin{align*}
     T =
     \frac {\binom {3m} m} {\binom {2m} m \binom {2m + \epsilon m} {2m}} = \frac {\frac {(3m)!} {m!(2m)!} } {\frac {(2m)!} {m!m!} \cdot \frac {(2m + \epsilon m)!} {(2m)!(\epsilon m)!}} = \frac {(3m)!m!(\epsilon m)!} {(2m)!((2 + \epsilon)m)!}.
\end{align*}

We then note that by Stirling's approximation, we have that $n! = \Theta(\sqrt n (\frac n e)^n)$. Applying this to the above expression gives:
\begin{align*}
    T &= \Theta\left(\frac {\sqrt{3m}(\frac {3m} e)^{3m} \sqrt{m}(\frac {m} e)^{m} \sqrt{\epsilon m}(\frac {\epsilon m} e)^{\epsilon m}} {\sqrt{2m}(\frac {2m} e)^{2m} \sqrt{(2 + \epsilon)m}(\frac {(2 + \epsilon)m} e)^{(2 + \epsilon) m}}\right)\\
    &= \Theta\left(\sqrt m \frac {(\frac {3m} e)^{3m} (\frac {m} e)^{m} (\frac {\epsilon m} e)^{\epsilon m}} {(\frac {2m} e)^{2m}(\frac {(2 + \epsilon)m} e)^{(2 + \epsilon) m}}\right)
    = \Theta\left(\sqrt m \left(\frac {3^{3} \epsilon^{\epsilon}} {2^{2}(2 + \epsilon)^{2 + \epsilon}}\right)^m\right).
\end{align*}

We therefore must have $2^q \geq \frac 1 6 T = \Theta\left(\sqrt m \left(\frac {3^{3} \epsilon^{\epsilon}} {2^{2}(2 + \epsilon)^{2 + \epsilon}}\right)^m\right)$. When $\epsilon \leq \frac 1 {10}$, we have that $\frac {3^{3} \epsilon^{\epsilon}} {2^{2}(2 + \epsilon)^{2 + \epsilon}} \geq 1.128$, meaning that we must have $2^q \geq \Theta(\sqrt m \cdot 1.128^m)$. It therefore follows that we must have $q \geq \lg \sqrt m + m\lg(1.128) - O(1) = \Omega(m)$ as desired.
\end{proof}

\lemmaLMatroid*
\begin{proof}[Proof of \Cref{lemma:l-matroid}]
We verify the matroid exchange property for $\I'$. Let $S,T\in\I'$ with $|S|<|T|$. By definition there exist subsets $L\subseteq S$, $R\subseteq T$ with $|L|\le l$, $|R|\le l$, and $S\setminus L,T\setminus R\in\I$. Choose $L$ to be minimal with this property.

If $|L|<l$, pick any $t\in T\setminus S$. Then $L\cup{t}$ has size at most $l$, and hence $(S\cup{t})\setminus (L\cup{t})=S\setminus L\in\I$, so $S\cup{t}\in\I'$.

If $|L|=l$, then $|S\setminus L|=|S|-l<|T|-l\le|T\setminus R|$. Since $M$ is a matroid, there exists $t\in(T\setminus R)\setminus(S\setminus L)$ with $(S\setminus L)\cup{t}\in\I$. Note $t\notin L$ because otherwise $L\setminus{t}$ would be a smaller set contradicting minimality of $L$. Thus $t\notin S$, and using the same $L$ we obtain $S\cup{t}\in\I'$.
\end{proof}

\lemmaPartitionWWitness*
\begin{proof}[Proof of \Cref{lemma:partition-w-witness}]
Since $Q'_{m,\alpha,S}$ is the truncation of $Q_{m,\alpha,S}$ from rank $m+\frac m\alpha$ to $m+\frac m\alpha-1$, a set can be independent in one and not the other only if it has size $m+\frac m\alpha$. For such a set $W$, $W$ is independent in $Q_{m,\alpha,S}$ iff there exists $L\subseteq W$ with $|L|\le\frac m\alpha$ and $W\setminus L\in P_S$. The latter means $W\setminus L$ contains at most one element from each part of $S$, so $|W\setminus L|\le m$. Given $|W|=m+\frac m\alpha$, both inequalities must be tight, so $|L|=\frac m\alpha$ and $|W\setminus L|=m$. Hence $W\setminus L$ has exactly one element from each part of $S$, which implies $W$ contains at least one element from every part. The converse is immediate.
\end{proof}

\lemPartitionWitnessCount*
\begin{proof}[Proof of \Cref{lem:partition-witness-count}]
By \Cref{lemma:partition-w-witness}, if $W$ is a witness for $S$ then $W$ contains at least one element from each part of $S$. To form such an $S$ we first choose $m$ distinct elements of $W$ and assign them to the $m$ parts (there are $m!\binom{m+\frac m\alpha}{m}$ ways to do this), and then partition the remaining $\alpha m$ elements of $[(\alpha+1)m]$ into $m$ parts of size $\alpha$ (there are $\binom{\alpha m}{\alpha,\dots,\alpha}$ ways). This yields the stated upper bound; some partitions may be counted multiple times.
\end{proof}

\lemmaWitnessFraction*
\begin{proof}[Proof of \Cref{lem:witness_fraction}]
It suffices to estimate the ratio
\begin{align*}
\frac{\binom{(\alpha+1)m}{\alpha+1,\dots,\alpha+1}}{m!\binom{m+\frac m\alpha}{m}\binom{\alpha m}{\alpha,\dots,\alpha}} = \gamma(\alpha)^{m}\poly(m, \alpha)^{-1}.
\end{align*}
Expanding the multinomial coefficients and canceling factorials gives
\begin{align*}
\frac{\binom{(\alpha+1)m}{\alpha+1,\dots,\alpha+1}}{m!\binom{m+\frac m\alpha}{m}\binom{\alpha m}{\alpha,\dots,\alpha}}
&= \frac{((\alpha+1)m)!\left(\frac m\alpha\right)!}{(\alpha+1)^m\left(m+\frac m\alpha\right)!(\alpha m)!}.
\end{align*}
Applying Stirling's approximation $n!=\poly(n)(n/e)^n$ yields
\begin{align*}
    \frac{((\alpha + 1)m)! \frac m \alpha !}{(\alpha + 1)^m (m + \frac m \alpha)! (\alpha m)!} &= \poly(m, \alpha) \cdot \frac{\stexp{(\alpha+1)m} \stexp{\frac m \alpha}}{(\alpha+1)^m \stexp{m+ \frac m \alpha} \stexp{\alpha m}} \\
    &= \poly(m, \alpha) \cdot \left( \frac{(\alpha+1)^{\alpha+1} (\frac 1 \alpha)^{\frac 1 \alpha}}{(\alpha+1) \left(1 + \frac 1 \alpha\right)^{1 + \frac 1 \alpha} \alpha^\alpha} \right)^m \\ 
    &= \poly(m, \alpha) \cdot \left( \left(1+\frac 1 \alpha\right)^{\alpha-1-\frac 1 \alpha} \cdot \left(\frac 1 \alpha\right)^{\frac 1 \alpha} \right)^m,
\end{align*}
which proves the claim.
\end{proof}

\lemmaPartitionDetProb*
\begin{proof}[Proof of \Cref{lemma:partition-det-prob}]
Only queries of size more than $\frac m\alpha$ can reveal a witness; we therefore ignore smaller queries and assume $A$ makes at most $q$ useful queries. The adaptive behavior of $A$ can be viewed as a binary decision tree of depth at most $q$, whose internal nodes correspond to useful queries. Such a tree has fewer than $2^{q+1}$ nodes, so $A$ can query at most $2^{q+1}$ distinct sets $W$. By \Cref{lem:witness_fraction}, each queried set is a witness for at most a $\poly(m, \alpha)\gamma(\alpha)^{-m}$ fraction of partitions $S$, so the probability that $A$ ever queries a witness is at most $\frac{2^{q+1}\poly(m, \alpha)}{\gamma(\alpha)^m}$.

Conditioned on selecting a partition $S$ for which $A$ does not query any witness, $A$ cannot distinguish $Q_{m,\alpha,S}$ from $Q'_{m,\alpha,S}$ and thus must output the same answer for both, succeeding with probability $1/2$ over the final coin flip that chooses which matroid was presented. Hence the overall success probability is at most
\begin{align*}
\frac{1}{2}\left(1-\frac{2^{q+1}\poly(m, \alpha)}{\gamma(\alpha)^m}\right)+1\cdot\frac{2^{q+1}\poly(m, \alpha)}{\gamma(\alpha)^m}
= \frac{1}{2}+\frac{2^q\poly(m, \alpha)}{\gamma(\alpha)^m},
\end{align*}
as claimed.
\end{proof}

\lemmaTruncatePreserveK*
\begin{proof}[Proof of \Cref{lemma:truncate-preserve-k}]
Let $S_1,\dots,S_k$ be an independent partition of $E$ in $M$. If $|S_i|\le r$ for every $i$ then the same partition is independent in $M'$, and we are done. Otherwise there exist indices $i,j$ with $|S_i|>r$ and $|S_j|<r$. Then by the matroid exchange property, there must exist an element $e \in S_i \setminus S_j$ such that $S_j \cup e$ is independent. We will move this $e$ into $S_j$. By repeating this process, we will arrive at partition $S'_1,\dots,S'_k$ in which every part has size at most $r$.

Since each $S'_i$ has size $\le r$, truncating $M$ to rank $r$ does not change the independence status of any $S'_i$. Hence $S'_1,\dots,S'_k$ is an independent partition of $M'$, proving the lemma.
\end{proof}

\thmpartitionupp*
\begin{proof}[Proof of \Cref{thm:partition-upp}]
We determine the partition size $k$ by testing candidate values $\lambda$ via a binary search over the interval $[1,n]$. For each candidate $\lambda$ we perform the following test:

\begin{enumerate}
\item Truncate $M$ to rank $r=\lceil n/\lambda\rceil$. By \Cref{lemma:truncate-preserve-k}, this preserves the property that $M$ can be partitioned into $\lambda$ sets (if true).

\item Run the algorithm from \Cref{lemma:quanrud-covering} on the truncated matroid with parameter $\lambda$ and $\epsilon=1/(2\lambda)$ as follows: any query of size at most $r$ is recursively computed via a call to the oracle for $M$, while any query of larger size is immediately answered in the negative. The lemma guarantees one of two outcomes in $\tmco(n/\epsilon)=\tmco(n\lambda)$ independence queries:
\begin{itemize}
\item
either a covering of at most $\lfloor(1+\epsilon)\lambda\rfloor=\lambda$ bases, showing $k\le \lambda$,

\item
or a certificate of infeasibility for any covering of at most $\lfloor(1-\epsilon)\lambda\rfloor=\lambda-1$ bases, showing $k\ge \lambda$.
\end{itemize}
\end{enumerate}

The queries that are passed through to the oracle for $M$ have size at most $r = O(n/\lambda)$. Hence the total size-sensitive cost of one test is
\begin{align*}
\tmco(n\lambda)\cdot O\left(\frac{n}{\lambda}\right)=\tmco(n^2).
\end{align*}

Since binary search performs $O(\log n)$ such tests, the total query cost remains $\tmco(n^2)$. This establishes the theorem.
\end{proof}

\section{Linear Oracles} \label{section:oracles}
In this section we describe linear or near-linear independence oracles for several common matroid families. These oracles are standard; we include them for completeness. The selection of matroids is similar to that in \cite{quanrud2024faster}.

\paragraph{Partition matroids.}
As defined in \Cref{section:preliminaries}, let $M=(E,\I)$ be a partition matroid defined by a partition $E=\bigcup_j E_j$ and capacities $c_j$. Given a query $I\subseteq E$, an independence check can be implemented in $\mco(|I|)$ time by scanning the elements of $I$, maintaining a counter $d_j$ for the part index $j$ of each element, and returning \enquote{dependent} as soon as some $d_j>c_j$.

\paragraph{Graphic matroids.}
For a graph $G=(V,E)$ the graphic matroid has ground set $E$ and $I\subseteq E$ is independent iff the subgraph $(V,I)$ is a forest. A linear-time oracle (in the size of the query) proceeds by building the induced subgraph on the edges in $I$ and running a graph traversal algorithm on that subgraph. The query is dependent if some component contains at least as many edges as vertices (equivalently, when a cycle is discovered).

\paragraph{Bicircular matroids.}
The bicircular matroid of a graph $G=(V,E)$ has $I\subseteq E$ independent iff each connected component of the induced subgraph $(V,I)$ contains at most one cycle. Using the same induced-subgraph traversal as above, the query is dependent if some component contains strictly more edges than vertices.

\paragraph{Convex-transversal matroids and simple job-scheduling matroids.}
Let $G=(L,R,E)$ be a bipartite graph where $R$ is given a linear order and each $v\in L$ has a neighbourhood that is an interval in $R$. The convex-transversal matroid \cite{edmonds1965transversals} on ground set $L$ declares $I\subseteq L$ independent iff there exists a matching that covers $I$. A simple job-scheduling matroid is the special case where each neighborhood is a prefix of $R$.

Given a query $I\subseteq L$, represent each $x\in I$ by its interval $[a_x,b_x]\subseteq R$. A correct greedy test is obtained by sorting the intervals by increasing right endpoint $b_x$ (and breaking ties by decreasing left endpoint $a_x$), then assigning each interval to the smallest unused position $p\in[a_x,b_x]$. Maintaining the set of unused positions can be done with a balanced binary search tree, where consecutive unused elements are collapsed into a single interval. Thus the independence oracle runs in near-linear time in the query size (plus a logarithmic factor for the sorting and the binary search tree).

%%%%%%%%%%%%%%%%%%%%%%%%%%%%%%%%%%%%%%%%%%%%%%%%%%%%%%%%%%%%%%%%%%%%%%%%%%%%%%%
%%%%%%%%%%%%%%%%%%%%%%%%%%%%%%%%%%%%%%%%%%%%%%%%%%%%%%%%%%%%%%%%%%%%%%%%%%%%%%%

\end{document}